\newcommand{\remove}[1]{}
\spnewtheorem{observation}{Observation}{\bfseries}{\itshape}
\spnewtheorem{claimN}{Claim}{\bfseries}{\itshape}
\newcommand{\tw}{\mathbf{tw}}
\newcommand{\tcw}{\mbox{\bf tcw}}
\newcommand{\minbi}{{\sc Min Bisection}}
\newcommand{\intcw}{\mathbf{in}\text{-}\mathbf{tcw}}
\renewenvironment{proof}{\noindent \textsc{Proof:}}{\hfill$\square$\medskip}
\newenvironment{proof-sketch}{\noindent \textsc{Sketch of proof:}}{\hfill$\square$\medskip}
\newenvironment{proofof}{\noindent \textsc{Proof of the Claim:}}{\hfill$\Diamond$\medskip}
\title{An FPT 2-Approximation for \\ Tree-Cut Decomposition\thanks{The research of the last author was
co-financed by the European Union (European Social Fund ESF) and Greek national
funds through the Operational Program ``Education and Lifelong Learning'' of the
National Strategic Reference Framework (NSRF), Research Funding Program: ARISTEIA~II.
The second author was supported by Basic Science Research
  Program through the National Research Foundation of Korea (NRF)
  funded by  the Ministry of Science, ICT \& Future Planning
  (2011-0011653).} \thanks{An extended abstract of this article will appear in the \emph{Proceedings of the 13th Workshop on Approximation and Online Algorithms} (WAOA), Patras, Greece, September 2015.}}
\author{Eunjung Kim~\inst{1}, Sang-il Oum~\inst{2}, Christophe
Paul~\inst{3}, Ignasi Sau~\inst{3}, and\\ Dimitrios M. Thilikos~\inst{3,4,5}}
\authorrunning{E. Kim, S. Oum, C. Paul, I. Sau, and D.~M. Thilikos}
\titlerunning{An FPT 2-Approximation for Tree-Cut Decomposition}
\institute{CNRS, LAMSADE, Paris, France.\\
\email{eunjungkim78@gmail.com}
\and Department of Mathematical Sciences, KAIST,  Daejeon, South Korea.\\
\email{sangil@kaist.edu}
\and CNRS, Université de Montpellier, LIRMM, Montpellier, France.\\
\email{christophe.paul@lirmm.fr},\ \email{ignasi.sau@lirmm.fr},\ \email{sedthilk@thilikos.info}
\and Department of Mathematics, University of Athens, Greece.
\and Computer Technology Institute  Press  ``Diophantus'',  Patras, Greece.}
\begin{document}

\maketitle
\setcounter{footnote}{0}

\begin{abstract}
The tree-cut width of a graph is a graph parameter
defined by Wollan~\textsl{[J. Comb. Theory, Ser. B, 110:47--66, 2015]}
with the help of tree-cut decompositions.
In certain cases, tree-cut width appears to be more adequate than treewidth
as an invariant that, when bounded, can accelerate the resolution of
intractable problems. While designing  algorithms
for problems with bounded tree-cut width, it is
important to have a parametrically tractable way to
compute the exact value of this parameter or, at least,
some constant approximation of it. In this paper we give
a parameterized $2$-approximation algorithm for the computation
of tree-cut width; for an input $n$-vertex graph $G$
and an integer $w$, our algorithm either confirms that the tree-cut width of $G$
is more than $w$ or returns  a tree-cut decomposition
of $G$ certifying that its tree-cut width is at most $2w$, in time $2^{O(w^2\log w)} \cdot n^2$. Prior to this work, no {\sl constructive} parameterized algorithms, even approximated ones, existed  for computing the tree-cut width of a graph. As a consequence of the Graph Minors series by Robertson and Seymour, only the {\sl existence} of a decision algorithm was known.
\end{abstract}

\noindent\textbf{Keywords:} Fixed-Parameter Tractable algorithm; tree-cut width; approximation algorithm.

\section{Introduction}
\label{sec:intro}

One of the most popular ways to decompose a graph into smaller pieces is given by the notion
of a tree decomposition. Intuitively, a  graph $G$ has a tree decomposition of small width
if it can be decomposed into small (possibly overlapping) pieces that are altogether arranged in a tree-like structure.
The {\em width} of such a decomposition is defined as the minimum size of these pieces.
The graph invariant of {\em treewidth} corresponds to the minimum width of  all
possible tree decompositions
and, that way, serves as a measure of the topological resemblance of a graph to the structure of
a tree.
The importance of tree decompositions and treewidth
in graph algorithms resides in the fact that
a wide family of  ${\sf {\sf NP}}$-hard graph problems
admits {\sf FPT}-algorithms, i.e., algorithms that run in $f(w)\cdot n^{O(1)}$
steps, when parameterized by the treewidth $w$ of their input graph.
According to the celebrated theorem of Courcelle, for every problem that
can be expressed in Monadic Second Order Logic (MSOL)~\cite{Courcelle90}
it is possible to design
an $f(w)\cdot n$-step algorithm on graphs of treewidth at most $w$.
Moreover, towards improving the parametric dependence, i.e., the
function $f$, of this algorithm for specific problems,
it is possible to design tailor-made dynamic programming
algorithms on the corresponding tree decompositions.
Treewidth has also been important from the combinatorial point of view.
This is mostly due to the celebrated ``{\sl planar
graph  exclusion theorem}''~\cite{RobertsonS-III,RobertsonS-V}. This theorem asserts that:

\begin{quote}(*) {\sl Every
graph that does not contain some fixed wall\footnote{We avoid the formal definition of a wall here. Instead, we provide the following image
\scalebox{0.05}{\includegraphics{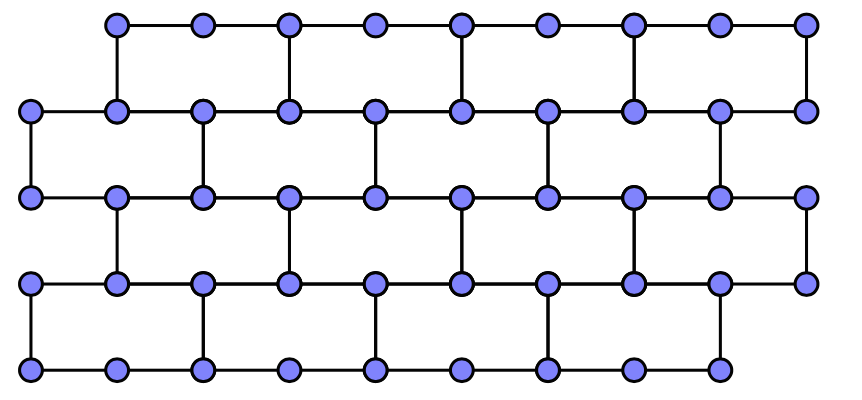}}
that, we believe, provides the necessary intuition.} as a topological minor\footnote{A graph $H$ is a {\em topological minor} of
a graph $G$ if  a subdivision of $H$ is a subgraph of $G$.} has bounded treewidth.} \smallskip

\end{quote}

 The above  result
had a considerable algorithmic impact as every problem for which a negative (or positive)
answer can be certified by the existence of some sufficiently big wall in its input,
is reduced to its resolution on graphs of bounded treewidth.
This induced a lot of research on the derivation of
fast parameterized algorithms that can construct (optimally or approximately)
these decompositions. For instance, according to~\cite{Bod96alin},
treewidth can be computed in $f(OPT)\cdot n$
steps where $f(w)=2^{O(w^3)}$ while, more recently, a 5-approximation
for treewidth was given in~\cite{BodlaenderDDFLP13anap} that runs in $2^{O(OPT)}\cdot n$ steps.

Unfortunately, the aforementioned success stories about treewidth
have some natural limitations. In fact, it is not always
possible to use treewidth for improving the tractability of {\sf {\sf NP}}-hard problems.
In particular, there are interesting cases of problems where no such
an {\sf FPT}-algorithm is expected to
exist~\cite{FellowsFLRSST11,GolovachT11path,DomLSV08capa}.
Therefore,
it is an interesting question whether there are alternative, but still general,
graph invariants
that can provide tractable parameterizations
for such problems.


A promising candidate in this direction
is the graph invariant of {\sl tree-cut width}
that was recently
introduced by Wollan in~\cite{Wollan15thes}. Tree-cut width can be seen as  an ``edge'' analogue of  treewidth. 
It is defined using a different
type of decompositions, namely, tree-cut decompositions that
are roughly tree-like partitions of a graph into mutually disjoint pieces
such that both the size of some ``essential'' extension of these pieces and the
number of edges crossing two neighboring pieces are bounded
(see Section~\ref{sec:prelim} for the formal definition).
Our first result is that it is {\sf {\sf NP}}-hard to decide, given a graph $G$ and an integer $w$,
whether the input graph $G$ has tree-cut width at most $w$. This follows from a reduction from the \minbi\ problem that is presented in Subsection~\ref{npbpdp}. This encourages us to consider a parameterized algorithm
for this problem.



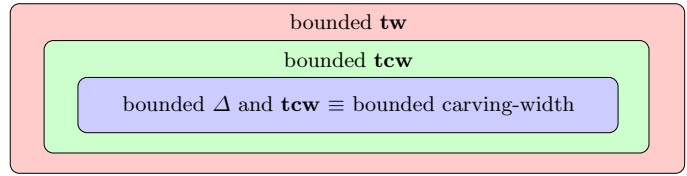
\begin{figure}
\begin{center}
 \scalebox{.9}{
\begin{tikzpicture}
   \draw[rounded corners=1ex,fill=red!20] (-1,-.6) rectangle (255pt,14ex);
   \draw[rounded corners=1ex,fill=green!20] (-.5,-.3) rectangle (240pt,10ex);
   \draw[rounded corners=1ex,fill=blue!20] (0,0) rectangle (227pt,6ex);
     \node[] at (4,.4) {bounded $\Delta$ and ${\bf tcw}$ $\equiv$ bounded carving-width  };
       \node[] at (4,1.08) {bounded ${\bf tcw}$  };
       \node[] at (4,1.65) {bounded ${\bf tw}$  };
\end{tikzpicture}
}
\end{center}
\caption{The relations between classes with bounded treewidth (${\bf tw}$) and tree-cut width (${\bf tcw}$).}
\label{fig:incl}
\end{figure}

Another tree-like parameter that can be seen as an edge-counterpart of tree\-width
is {\sl carving-width}, defined in~\cite{SeymourT94call}. It is known that a graph has
bounded carving-width if and only if both its treewidth and its maximum degree are bounded. 
We stress that
this is not the case for tree-cut width, which can also  capture  graphs with unbounded maximum degree and, thus, is more general than carving-width.
There are two
reasons why tree-cut width might be a good alternative for treewidth. We expose them below.

\medskip

 \noindent {\bf (1) Tree-cut width as a parameter.}  From now on we denote by $\tcw(G)$ (resp.  $\tw(G)$) the tree-cut width (resp. treewidth) of a graph $G$.
 As it is shown in~\cite{Wollan15thes}  $\tcw(G)=O(\tw(G)\cdot \Delta(G))$.
Moreover, in~\cite{GKS14}, it was proven that $\tw(G)=O((\tcw(G))^2)$ and in Subsection~\ref{vsvs8plq}, we prove that the latter upper bound
is asymptotically tight.
The graph class inclusions generated by  the aforementioned relations are
depicted in Fig.~\ref{fig:incl}. As tree-cut width is a ``larger'' parameter than treewidth, one may expect that some problems that are intractable when parameterized by treewidth (known to be {\sf W}$[1]$-hard or open) become tractable when parameterized by tree-cut width. Indeed, some recent  progress
on the  development of a  dynamic programming framework for
tree-cut width (see~\cite{GKS14}) confirms that assumption.
According to~\cite{GKS14}, such problems include {\sc Capacitated
Dominating Set} problem,
 {\sc Capacitated Vertex Cover}~\cite{DomLSV08capa}, and {\sc Balanced Vertex-Ordering} problem. We expect that more problems will fall into this category.
 \medskip\medskip

\noindent{\bf (2) Combinatorics of tree-cut width.}
In~\cite{Wollan15thes} Wollan proved the following counterpart of (*):
\begin{quote}\smallskip
(**) {\sl Every
graph that does not contain some fixed wall as an immersion\footnote{A graph $H$ is an {\em immersion} of
a graph $G$ if $H$ can be obtained from some subgraph of $G$ after replacing edge-disjoint paths with edges.} has bounded tree-cut width.} \smallskip
\end{quote}
Notice that (*) yields (**) if we replace ``topological minor''  by ``immersion'' and
``treewidth'' by ``tree-cut width''. This implies that tree-cut width has
combinatorial properties analogous to those of treewidth.
It  follows that every problem where a negative (or positive) answer
can be certified by the existence of a wall as an immersion, can be reduced
to the design of a suitable dynamic programming algorithm for this problem 
     on graphs of bounded tree-cut width.\medskip

\paragraph{\bf Computing tree-cut width.} It follows that designing dynamic programming algorithms
on tree-cut decompositions might  be a promising task when this is not possible (or promising)
on tree-decompositions. Clearly, this makes it imperative to have an efficient algorithm
that, given a graph $G$ and an integer $w$,
 constructs tree-cut decompositions of width at most $w$ or reports that this is not possible.
Interestingly, an $f(w)\cdot n^3$-time algorithm for the {\em decision version} of the problem is known to {\em exist} but this is not done in a constructive way. Indeed, for every fixed $w$, the class of graphs with tree-cut width at most $w$ is
closed under immersions~\cite{Wollan15thes}. By the fact that
graphs are well-quasi-ordered under immersions~\cite{RobertsonS10-XXIII},
for every $w$, there exists a {\sl finite}
 set ${\cal R}_{w}$ of graphs such that $G$ has tree-cut width at most $w$
if and only if it does not contain any of the graphs in ${\cal R}_{w}$ as an immersion.  From~\cite{GroheKMW11find}, checking whether an $h$-vertex
graph $H$ is contained as an immersion in some $n$-vertex graph $G$
can be done in $f(w)\cdot n^{3}$ steps. It follows that, for every fixed $w$, there {\sl exists} a polynomial algorithm checking whether the tree-cut width of a graph
is at most $w$. Unfortunately, the  {\sl construction} of this algorithm
requires the knowledge of  the set ${\cal R}_{w}$ for every $w$, which is
not provided by the results in~\cite{RobertsonS10-XXIII}. Even if we knew ${\cal R}_w$, it is not clear how to construct a tree-cut decomposition of width at most $w$, if one exists.

In this paper we make a first step
towards a constructive parameterized algorithm for tree-cut width by giving
an {\sf FPT} 2-approximation for it. Given a graph $G$ and an integer $w$, our algorithm either reports that $G$ has tree-cut width more than $w$ or outputs a tree-cut decomposition of width at most $2w$  in $2^{O(w^2\log w)}n^{2}$ steps. The algorithm is presented in Section~\ref{sec:algorithm}.

\section{Problem definition and preliminary results}
\label{sec:prelim}

Unless specified otherwise, every graph in this paper is undirected and loopless and may have multiple edges. By $V(G)$ and $E(G)$ we denote the vertex set and the edge set, respectively, of a graph $G$. Given a vertex $x\in V(G)$, the {\em neighborhood} of $x$ is $N(x)=\{y\in V(G)\mid xy\in E(G)\}$. Given two disjoint sets $X$ and $Y$ of $V(G)$, we denote $\delta_G(X,Y)=\{xy\in E(G)\mid x\in X, y\in Y\}$. For a subset $X$ of $V(G)$, we define $\partial_G(X)=\{x\in X \mid N(x)\setminus X\neq\emptyset\}$.

\subsection{Tree-cut width and treewidth}

\noindent{\bf Tree-cut width}.
A {\em tree-cut decomposition} of $G$ is a pair $(T,\cal{X})$ where $T$ is a tree and ${\cal X}=\{X_t\subseteq V(G)\mid t\in V(T)\}$ such that \begin{itemize}
\item[$\bullet$] $X_t\cap X_{t'}=\emptyset$ for all distinct $t$ and $t'$ in $V(T)$,
\item[$\bullet$] $\bigcup_{t\in V(T)} X_t=V(G)$.
\end{itemize}
From now on we refer to the vertices of $T$ as {\em nodes}. The sets in $\cal{X}$ are called the {\em bags} of the tree-cut decomposition. Observe that the conditions above allow to assign an empty bag for some node of $T$. Such nodes are called {\em trivial nodes}. Observe that we can always assume that trivial nodes are internal nodes.

Let $L(T)$ be the set of leaf nodes of $T$. For every tree-edge $e=\{u,v\}$ of $E(T)$, we let $T_u$ and $T_v$ be the subtrees of $T\setminus e$ which contain $u$ and $v$, respectively.

We define the \emph{adhesion} of a tree-edge $e=\{u,v\}$ of $T$ as follows:
$$\delta^T(e)=\delta_{G}(\bigcup_{t\in V(T_u)} X_t, \bigcup_{t\in V(T_v)} X_t).$$

For a graph $G$ and a set $X\subseteq V(G)$, the {\em 3-center} of $(G,X)$ is the graph obtained from $G$ by repetitively
dissolving every vertex $v \in V(G)\setminus X$ that has two neighbors and degree 2 and removing every
vertex $w \in V(G)\setminus X$ that has degree at most 2 and one neighbor ({\em dissolving} a vertex
$x$ of degree two with exactly two neighbors $y$ and $z$ is the operation of removing $x$ and adding
the edge $\{y,z\}$ -- if this edge already exists then its multiplicity is increased by one).

Given a tree-cut decomposition $(T,\cal{X})$ of $G$ and node $t\in V(T)$, let $T_1, \ldots, T_{\ell}$
be the connected components of $T\setminus t$. The {\em torso} of $G$ {\em at} $t$, denoted by $H_t$, is a graph obtained
from $G$ by identifying each non-empty vertex set $Z_i:=\bigcup_{b\in V(T_i)} X_{b}$ into a single
vertex $z_i$ (in this process, parallel edges are kept). We denote by $\bar{H}_t$
the 3-center of $(H_t,X_t)$. Then the {\em  width} of $(T,\cal{X})$ equals
$$\max\ (\{|\delta^T(e)|: e\in E(T)\}\ \cup\  \{|V(\bar{H}_t)|: t\in V(T)\}).$$
The \emph{tree-cut width} of $G$, or $\textbf {tcw}(G)$ in short, is
the minimum width of $(T,\cal{X})$ over all tree-cut decompositions $(T,\cal{X})$ of $G$.

\medskip
The following definitions will be used in the approximation algorithm.
%
Let $(T,{\cal X})$ be a tree-cut decomposition of $G$. It is \emph{non-trivial} if it contains at least two non-empty bags, and {\em trivial} otherwise. We will assume that every leaf of a tree-cut decomposition has a non-empty bag. The {\em internal-width} of a non-trivial tree-cut decomposition $(T,{\cal X})$ is
$$\intcw(T,\mathcal{X})=\max\ (\{|\delta^T(e)|: e\in E(T)\}\ \cup\  \{|V(\bar{H}_t)|: t\in V(T)\setminus L(T)\}).$$
If $(T,{\cal X})$ is trivial, then we set $\intcw(T,\mathcal{X})=0$.
\medskip

We decision problem corresponding to tree-cut width is the following:

\begin{center}
\fbox{\small\begin{minipage}{11,9cm}
\noindent{\sc Tree-cut Width}\\
{\sl Input}: a plane graph $G$ and a non-negative integer $k$.\\
{\sl Question}: $\tcw(G)\leq k$?
\end{minipage}}
\end{center}

%

\medskip
\noindent
\textbf{Treewidth}.
A \emph{tree decomposition} of a graph $G$ is a pair $(T,{\cal Y}) = \{Y_x : x \in V(T) \})$ such that $T$ is a tree and  ${\cal Y}$ is a collection of subsets of $V(G)$ where
\begin{itemize}
\item[$\bullet$] $\bigcup_{x \in V(T)} Y_x = V(G)$;
\item[$\bullet$] for every edge $\{u,v\} \in E (G)$ there exists $x \in V(T)$ such that $u,v \in Y_x$; and
\item[$\bullet$] for every vertex $u \in V(G)$ the set of nodes $\{ x\in V(T) : u \in Y_x \}$ induces a subtree of $T$.
\end{itemize}
The vertices of $T$ are called {\em nodes} of $(T,{\cal Y})$  and the sets $Y_x$ are called bags.
The \emph{width} of a tree decomposition is the size of the largest bag minus one. The \emph{treewidth} of a graph, denoted by $\tw(G)$, is the smallest width of a tree decomposition of $G$.

\subsection{Computing tree-cut width is {\sf NP}-complete}
\label{npbpdp}

We prove that {\sc Tree-cut Width} is {\sf NP}-hard by a polynomial-time reduction from \minbi, which is known to be {\sf NP}-hard~\cite{GareyJ79}. The input of  \minbi\ is a graph $G$ and a non-negative integer $k$, and
the question is whether there exists a bipartition $(V_1,V_2)$ of $V(G)$ such that $|V_1|=|V_2|$ and $|\delta_G(V_1,V_2)|\leqslant k$.

\begin{theorem}\label{th:npc}
{\sc Tree-cut Width} is {\sf NP}-complete.
\end{theorem}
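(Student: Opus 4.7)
The plan is to prove the theorem by a polynomial-time reduction from \minbi, which is known to be {\sf NP}-hard. Membership in {\sf NP} is immediate, since a tree-cut decomposition is a certificate of polynomial size whose width can be verified in polynomial time: the $3$-center operation is a polynomial sequence of local dissolutions and deletions, and there are only linearly many adhesions and torsos to inspect.

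For hardness, given an instance $(G,k)$ of \minbi\ with $|V(G)|=2n$, I would construct a graph $G'$ and an integer $w$ such that $G$ admits a bisection of size at most $k$ if and only if $\tcw(G')\le w$. The guiding intuition is that a tree-cut decomposition with a single central edge $e$ immediately encodes a partition $(V_1,V_2)$ of $V(G)$ whose crossing edges are counted by the adhesion $|\delta^T(e)|$. To force this partition to be \emph{balanced}, I would attach to each vertex $v\in V(G)$ a rigid padding gadget $B_v$ (for instance, a small $3$-edge-connected graph, whose every vertex survives the $3$-center operation in any torso, or a bundle of parallel edges attached to a single anchor), with the size of the $B_v$'s calibrated so that the torso bound at the node incident to $e$ admits exactly $n$ subtrees on each side and no more. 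Then a width-$w$ decomposition of $G'$ is essentially forced to split $V(G)$ into two halves of equal size, and the adhesion along the separating edge equals the number of $G$-edges across, yielding the desired equivalence.

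For the forward direction, any bisection $(V_1,V_2)$ of $G$ of cut size at most $k$ yields a decomposition of $G'$ built from a central edge separating $V_1\cup\bigcup_{v\in V_1}V(B_v)$ from $V_2\cup\bigcup_{v\in V_2}V(B_v)$ and, at each side, a star of subtrees representing the gadgets; choosing $w$ slightly larger than both $k$ and the internal tree-cut width of a single $B_v$ makes all adhesions and torsos comply with $w$. For the converse, the rigidity of the $B_v$'s together with a counting argument on surviving representatives in torsos forces any width-$w$ decomposition to have a tree edge whose two sides contain equal numbers of gadgets, giving back a balanced bisection of $G$ whose cut is bounded by the corresponding adhesion.

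The main obstacle will be the precise combinatorial design of the gadget $B_v$ and the exact value of $w$: the gadget must be rigid enough (in both its internal tree-cut width and the multiplicity of its attachment to $v$) to pin down the central structure of any width-$w$ decomposition, yet simple enough to admit an explicit decomposition that matches the target $w$ in the forward direction. Tuning these parameters so that the three natural constraints — torso at the central node, adhesion along the central edge, and internal widths of the gadgets — are all simultaneously tight at $w$ is where most of the technical work lies.
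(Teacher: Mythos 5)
You correctly identify \minbi\ as the source problem and you correctly observe that the adhesion of a tree edge encodes a cut, but the proposal stops at a plan: the gadget $B_v$ and the target width $w$ are never specified, and, as you yourself note, that is precisely where the entire technical difficulty lies, so what you have is an outline rather than a proof. Moreover, the per-vertex-gadget idea as sketched runs into a structural obstacle you do not address: in a torso $H_t$, each connected component of $T\setminus t$ is identified into a \emph{single} vertex, so a node incident with the central edge $e$ contributes at most one torso vertex per remaining component, not ``$n$ subtrees on each side.'' To make a torso constraint count the gadgets, you would first have to force the decomposition to be a star (or close to one) around a known center, and nothing in your construction does that; the internal rigidity of each individual $B_v$ is irrelevant here because an adversarial decomposition can nest all of them under a single subtree, collapsing them to one torso vertex. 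Tuning parameters so that the adhesion bound, the torso bound, and the gadget widths are simultaneously tight is also delicate, since the adhesion of $e$ must count $G$-edges across the cut and not be swamped by gadget-attachment edges.

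The paper's reduction is genuinely different in construction and sidesteps these issues. Rather than per-vertex gadgets, it introduces a single large set $Q$ of size $w-2$, adds for every pair $x,y\in Q$ a set $C_{x,y}$ of $w+1$ new vertices each adjacent to both $x$ and $y$ (forcing $Q$ into a single bag $X_q$, since $x$ and $y$ are joined by $w+1$ edge-disjoint paths), and connects each original vertex to $n^2$ vertices of $Q$ to amplify the adhesion cost of placing that vertex outside $X_q$. The balance constraint then comes for free from the torso bound at $q$: since $|Q|=w-2\leq |V(\bar H_q)|\leq w$, at most two components of $T\setminus q$ can have adhesion $\geq 3$, and the $n^2$-fold amplification caps each of them at half of the original vertices, forcing an exact bisection whose crossing-edge count is read off from the adhesion of the corresponding tree edge. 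If you want to salvage the per-vertex-gadget route, you would need to design $B_v$ explicitly, pin down $w$, and supply a mechanism that forces the star shape; this may be feasible, but as written it is not yet a proof.
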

\begin{proof}
It is easy to see that \textsc{Tree-cut Width} is in {\sf NP}. We present a reduction from \minbi\ to {\sc Tree-cut Width}   (see Fig.~\ref{fig:tcw-npc}).
Let $(G,k)$ be an instance of \minbi\ on $n$ vertices. We may assume that $k\leqslant n^2$ since otherwise, the instance is trivially NO. We create an instance $(G',w)$  with $w=\frac{n^3}{2}+k$ as follows. The vertex set $V(G')$ consists of a set $V$ of size $n$, a set $Q$ of size $w-2$, and the set $C_{x,y}$ of size $w+1$ for every pair $x,y\in Q$. Edges are added so that:
\begin{itemize}
\item[$\bullet$] $G'[V]=G$.
\item[$\bullet$] For every pair $x,y \in Q$, all vertices of $C_{x,y}$ are adjacent with both $x$ and $y$.
\item[$\bullet$] Each $x\in V$ is adjacent with $n^2$ (arbitrarily chosen) vertices of $Q$.
\end{itemize}
We now proceed with the proof of the correctness of the above reduction.
Suppose that $(G,k)$ is a {\sc Yes}-instance to {\sc Min Bisection} with a bipartition $(V_1,V_2)$. Consider a tree-cut decomposition $(T,{\cal X})$ in which $V(T)$ contains three nodes $t_1,t_2,q$ and some additional nodes. The tree $T$ forms a star with $q$ as the center and all other nodes as leaves. We have $X_{t_i}=V_i$ for $i=1,2$, $X_q=Q$ and each vertex of $\bigcup_{x,y \in Q}C_{x,y}$ forms a singleton bag. It is not difficult to verify that $(T,{\cal X})$ is a tree-cut decomposition of $G'$ whose width is $w$. In particular, notice that $|V(\bar{H}_q)|=|Q|+2=w$ and $|\delta(t_i,q)|=\frac{n}{2}\cdot n^2+k=w$ for $i=1,2$.

Conversely, suppose that $G'$ admits a tree-cut decomposition $(T,{\cal X})$ of width at most $w$. Any two vertices $x,y\in Q$ must be in the same bag since they are connected by $w+1$ disjoint paths via $C_{x,y}$. Hence, there exists a tree node, say $q$, in $T$ such that $Q\subseteq X_q$.

\begin{figure}[t]
\centerline{\scalebox{0.66}{\includegraphics{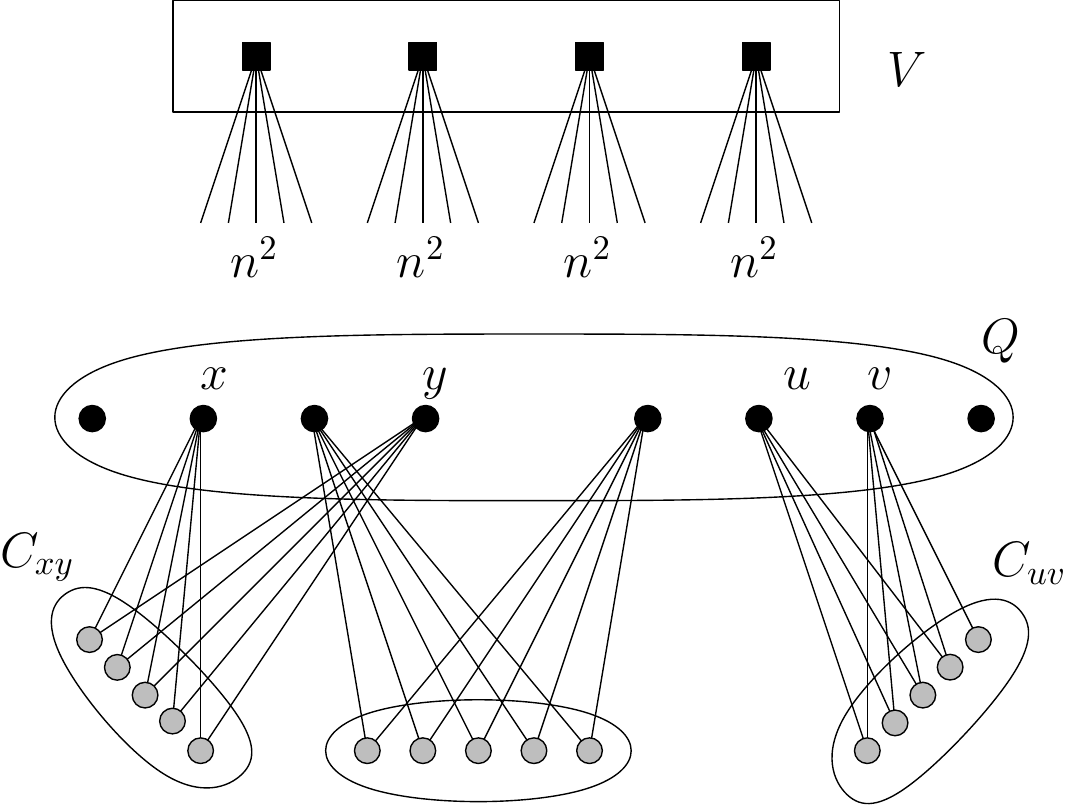}}}
\caption{The graph $G'$ in the transformation of the instances of  \minbi\ to equivalent
instances of  {\sc Tree-cut Width}.}
\label{fig:tcw-npc}
\end{figure}

Consider the set ${\cal C}=\{T_1,\ldots , T_{\ell}\}$ of the connected components of $T\setminus \{q\}$ and let $e_i$ be the tree-edge between $T_i$ and $q$. As $w\geqslant |V(\bar{H}_q)|\geqslant |Q|=w-2$,  there are at most two tree-edges among $e_1,\ldots , e_{\ell}$ such that $|\delta^{T}(e_i)|\geqslant 3$. This means that there are at most two subtrees among $T_1,\ldots , T_{\ell}$ such that $V\cap \bigcup_{t\in V(T_i)}X_t\neq \emptyset$. From the fact that $|Q|=w-2$, at least $n-2$ vertices of $V$ are {\em not} contained
in $X_q$ and thus there exists at least
one subtree $T_i$ such that $V\cap \bigcup_{t\in V(T_i)}X_t\neq \emptyset$. If
there  is $i$ such that $|V\cap \bigcup_{t\in V(T_i)}X_t|\geqslant \frac{n}{2}+1$,
then $|\delta^{T}(e_i)|\geqslant (\frac{n}{2}+1)\cdot n^2>w$, a contradiction.
Hence, we conclude
that there are exactly two
subtrees, say $T_1$ and $T_2$, in ${\cal C}$ such that $V\cap \bigcup_{t\in V(T_i)}X_t\neq \emptyset$ for $i=1,2$ and for $3\leqslant i\leqslant \ell$, we have $V\cap \bigcup_{t\in V(T_i)}X_t= \emptyset$. This, together with the fact that $|Q|=w-2$, enforces that the sets $V\cap \bigcup_{t\in V(T_1)}X_t$ and $V\cap \bigcup_{t\in V(T_2)}X_t$  make a bipartition of $V$ into sets of equal size. Let us call this bipartition $\{V_1,V_2\}$.
Observe that $\delta^{T}(e_i)\supseteq \delta(V_i,Q)\cup \delta(V_1,V_2)$,
thus $\delta^{T}(e_i)$ contains  at least $\frac{n}{2}\cdot n^2+|\delta(V_1,V_2)|$
edges for $i=1,2$. As $|\delta^{T}(e_1)|\leqslant w$, it
follows $|\delta(V_1,V_2)|\leqslant k$. Therefore, $(G,k)$ is {\sc Yes}-instance to \minbi\, which complete the proof.
\end{proof}

\subsection{Tree-cut width vs treewidth}
\label{vsvs8plq}

In this section we investigate  the relation between treewidth and tree-cut width.
The following was proved in~\cite{GKS14}.

\begin{proposition}
\label{tfdwtcw}
For a graph of tree-cut width at most $w$, its treewidth is at most $2w^2+3w$.
\end{proposition}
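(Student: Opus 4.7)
The plan is to construct, from a tree-cut decomposition $(T,\mathcal{X})$ of $G$ of width at most $w$, a tree decomposition of $G$ of width at most $2w^2+3w$. The starting observation is that the torso condition gives local control at every node $t\in V(T)$: since $|V(\bar H_t)|\leq w$, one has $|X_t|\leq w$ and the number of tree-edges $e$ incident to $t$ with $|\delta^T(e)|\geq 3$ (call these \emph{thick} at $t$) is at most $w-|X_t|$, because each such edge produces a vertex $z_i$ that survives in the $3$-center. By contrast, tree-edges at $t$ with $|\delta^T(e)|\leq 2$ (the \emph{thin} edges) can be arbitrarily numerous but carry at most two edges of $G$ each.

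Next, I would build $T'$ from $T$ by subdividing each tree-edge $e$ with an intermediate node $n_e$ whose bag is $Y_{n_e}=V(\delta^T(e))$ (of size at most $2w$), and for each original node $t$ I would put into $Y_t$ the set $X_t$, the endpoints of every thick adhesion at $t$, and, after rooting $T$ arbitrarily, the endpoints of the adhesion from $t$ to its parent. The thin edges, which can occur in unbounded numbers at a single node, must be treated more carefully: rather than letting their adhesion endpoints inflate $|Y_t|$, I would re-route each thin subtree $T_c$ so that its recursively-built sub-decomposition attaches directly to a bag already containing the $t$-side endpoints of $\delta^T(\{t,c\})$, which is feasible since at most two edges of $G$ leave $T_c$.

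The bag-size bound then follows by straightforward accounting. For every original node $t$, we have $|X_t|\leq w$; the endpoints of at most $w-|X_t|$ thick adhesions contribute at most $2w(w-|X_t|)\leq 2w^2$ vertices; and the parent-adhesion endpoints contribute at most $2w$ vertices. Altogether $|Y_t|\leq w+2w^2+2w=2w^2+3w$, while each subdivision bag $Y_{n_e}$ has size at most $2w$. Hence the width of $(T',\mathcal{Y})$ is at most $2w^2+3w$.

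The main obstacle will be verifying that the re-routing of thin subtrees actually yields a valid tree decomposition, namely that every edge of $G$ lies in some bag and that for every vertex $v$ the bags containing $v$ induce a subtree of $T'$. The delicate point is the case $|\delta^T(e)|=2$, where the two $t$-side adhesion endpoints need not appear together in any bag of $(T,\mathcal{X})$; carrying out the induction so that the root bag of each sub-decomposition contains the required interface, simultaneously for all thin subtrees incident to $t$, is the technical heart of the argument.
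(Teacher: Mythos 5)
The paper itself does not prove Proposition~\ref{tfdwtcw}; it is stated as a result of Ganian, Kim and Szeider~\cite{GKS14} and simply cited. So there is no ``paper's own proof'' to measure you against, only the cited one, and your blind attempt does follow the same basic plan: build a tree decomposition directly from a tree-cut decomposition of width $w$, using the torso bound to limit the number of bold (adhesion~$\geq 3$) subtrees at each node and then accounting separately for the thin (adhesion~$\leq 2$) subtrees. The arithmetic $|X_t|+2w(w-|X_t|)+2w\leq 2w^2+3w$ is fine.

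Where your write-up has a genuine gap is exactly the point you flag at the end, and it is not merely ``delicate'' --- as stated, the re-routing step is false. You claim that each thin subtree $T_c$ can be attached ``directly to a bag already containing the $t$-side endpoints of $\delta^T(\{t,c\})$, which is feasible since at most two edges of $G$ leave $T_c$.'' Feasibility does not follow. The two $t$-side endpoints of a thin adhesion can lie in two \emph{different sibling} subtrees of $c$ at $t$, and in an arbitrary tree-cut decomposition of small width there need not exist any bag of $(T,\mathcal{X})$, nor of the decomposition you are building, that contains both of them. A single vertex $v$ in one child subtree of $s$ can have edges into several other thin child subtrees of $s$ while not belonging to $X_s$; then $v$ appears in $V(\delta^T(e))$ for several thin tree-edges $e$ incident with $s$, yet $v\notin Y_s$, and the set of bags containing $v$ in your $T'$ is disconnected no matter where you glue the thin sub-decompositions. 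The missing ingredient --- and the key lemma in~\cite{GKS14} --- is that one may first normalize the tree-cut decomposition to a \emph{nice} one (without increasing its width) in which, for every thin node $c$ with parent $t$, the neighborhood of $\bigcup_{s\in T_c}X_s$ avoids all sibling subtrees of $c$; this forces the $t$-side adhesion endpoints of a thin child to lie in $X_t$ or strictly above $t$, and only then does your re-routing become well-defined. Without invoking (or reproving) this normalization, the argument does not go through. Incidentally, the same notion of normalized/nice tree-cut decomposition is quietly assumed later in the paper, in the proof of Lemma~\ref{justifs}.
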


In the rest of this subsection we prove
that the bound of Proposition~\ref{tfdwtcw} is asymptotically optimal. For this we need some definitions.\\

Let $G$ be a graph.
Two subgraphs $X$ and $Y$ of $G$ {\em touch} each other if either $V(X)\cap V(Y)\neq \emptyset$ or there is an edge $e=\{x,y\} \in E(G)$ with $x\in V(X)$ and $y\in V(Y)$. A {\em bramble} ${\cal B}$ is a collection of connected subgraphs of $G$ pairwise touching each other. The {\em order} of a bramble ${\cal B}$ is the minimum size of a hitting set $S$ of ${\cal B}$, that is  a set $S\subseteq V(G)$ such that for every $B\in {\cal B}$, $S\cap V(B)\neq \emptyset$. In Seymour and Thomas~\cite{SeymourT93}, it is known that the treewidth of a graph equals the maximum order over all brambles of $G$ minus one. Therefore, a bramble of  order $k$ is a certificate that the treewidth is at least $k-1$. \smallskip

We next define the  family of graphs ${\cal H}=\{H_w:w\in {\mathbb N}_{\geqslant 1}\}$ as follows.
 The vertex set of $H_w$ is a disjoint union of $w$ cliques, $Q_1,\ldots , Q_w$, each containing  $w$ vertices. For each $1\leqslant i\leqslant w$, the vertices of $Q_i$ are labeled as $(i,j)$, $1\leqslant j\leqslant w$. Besides the edges lying inside the cliques $Q_i$'s, we add an edge between $(i,j)\in Q_i$ and $(j,i)\in Q_j$ for every $1\leqslant i < j\leqslant w$. Notice that the vertex $(i,i)$ does not have a neighbor outside $Q_i$. The graph $H_4$ is depicted in Fig.~\ref{fig:h4}.

\begin{figure}[h]
\centerline{\scalebox{0.66}{\includegraphics{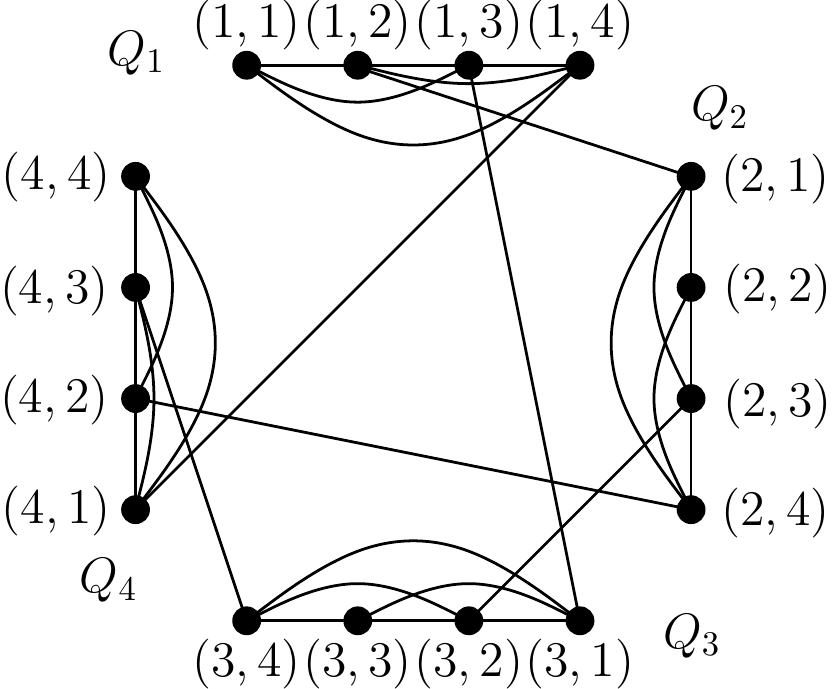}}}
\caption{The graph $H_4$.}
\label{fig:h4}
\end{figure}

\begin{lemma}
\label{rji3rlp}
The tree-cut width of $H_w$ is at most $w+1$.
\end{lemma}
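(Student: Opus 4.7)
The approach is to write down an explicit decomposition inspired by the clique structure of $H_w$ and verify the bound. I would take $T$ to be a star with a central node $c$ carrying the empty bag and $w$ leaves $t_1,\ldots,t_w$, and set $X_{t_i}=V(Q_i)$ for each $i$. Since the cliques $Q_1,\ldots,Q_w$ partition $V(H_w)$ and every leaf has a non-empty bag, this is a legitimate tree-cut decomposition (the central trivial node is internal, as explicitly permitted).

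It then remains to control the two quantities defining the width. First, for every tree-edge $e_i=\{c,t_i\}$, the adhesion $\delta^T(e_i)$ is the set of edges of $H_w$ with exactly one endpoint in $Q_i$; by the construction of $H_w$ these are precisely the $w-1$ crossing edges $\{(i,j),(j,i)\}$ for $j\neq i$, giving $|\delta^T(e_i)|=w-1$. Second, the torso $H_c$ is obtained by contracting each $Q_i$ to a single vertex $z_i$, and because any two distinct cliques are joined by exactly one edge of $H_w$, we get $H_c\cong K_w$, so $|V(\bar{H}_c)|\leq w$. Third, for each leaf $t_i$, the torso $H_{t_i}$ is obtained by contracting the unique connected component of $T\setminus t_i$ (containing $c$ and all the other leaves) into a single vertex $z$; inside $H_{t_i}$ this vertex is adjacent to the $w-1$ vertices $(i,j)$ with $j\neq i$, so $|V(H_{t_i})|=w+1$ and hence $|V(\bar{H}_{t_i})|\leq w+1$.

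Combining these three bounds yields a decomposition of width $\max(w-1,w,w+1)=w+1$, which gives $\tcw(H_w)\leq w+1$. There is no real obstacle in this proof: since the 3-center operation only dissolves or removes vertices, it cannot increase $|V(\bar{H}_t)|$ beyond $|V(H_t)|$, so one does not need to analyse exactly which vertices survive in $\bar{H}_c$ or $\bar{H}_{t_i}$ for small values of $w$; the bounds follow purely from the sizes of the un-reduced torsos together with the elementary count of crossing edges between the cliques.
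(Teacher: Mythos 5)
Your proof is correct and uses exactly the same tree-cut decomposition as the paper (a star with an empty central bag and leaves $Q_1,\dots,Q_w$); the paper simply states that the verification is straightforward, while you carry it out explicitly. The adhesion count $w-1$, the torso $K_w$ at the centre, and the torso of size $w+1$ at each leaf are all as intended.
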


\begin{proof}
Consider the tree-cut decomposition $(T,{\cal X})$, in which $T$ is a star with $t$ as the center and $q_1,\ldots , q_w$ as leaves. For the bags, we set $X_t=\emptyset$, and $X_{q_i}=Q_i$ for $1\leqslant i\leqslant w$. It is straightforward to verify that the tree-cut width of $(T,{\cal X})$ is $w+1$.
\end{proof}

\begin{lemma}
\label{l9l9f3j5}
For any positive integer $w$, the treewidth of $H_w\in {\cal H}$ is at least $\frac{1}{16}w^2-1$.
\end{lemma}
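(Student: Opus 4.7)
The plan is to use the balanced-separator property of treewidth (which is equivalent, up to additive constants, to the bramble duality cited above): if $\tw(G) \leq k$, then there exists $S \subseteq V(G)$ with $|S| \leq k+1$ such that every component of $G \setminus S$ has at most $|V(G)|/2$ vertices. It therefore suffices to show that whenever $|S| < w^2/16$, the graph $H_w \setminus S$ still contains a component of size more than $w^2/2$.

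Fix such an $S$ and let $s_i = |S \cap Q_i|$, so $\sum_i s_i = |S| < w^2/16$. At most $w/16$ cliques can be completely swallowed ($s_i = w$), leaving a set $A \subseteq [w]$ of at least $15w/16$ ``good'' cliques. On $A$ I would build the auxiliary graph $\tilde G$ whose edges record the surviving transpose edges, namely $ij \in E(\tilde G)$ iff $(i,j), (j,i) \notin S$. Each vertex of $S$ is an endpoint of at most one transpose edge of $H_w$, so $\tilde G$ has at most $|S|$ missing edges relative to the complete graph on $A$. A standard counting argument (if every component of $\tilde G$ had size at most $m-1$, the number of missing edges would be at least $|A|(|A|-m+1)/2$) then yields a component $C^* \subseteq A$ of $\tilde G$ of size at least $|A| - 2|S|/|A| \geq 15w/16 - 2w/15$.

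Since the transpose edges are the only inter-clique edges of $H_w$, the set $\bigcup_{i \in C^*}(Q_i \setminus S)$ is a single connected component of $H_w \setminus S$, of size at least $|C^*|\cdot w - |S| \geq (15w/16 - 2w/15)w - w^2/16 = (89/120)\,w^2 > w^2/2$, contradicting the balanced-separator property and giving the desired lower bound. The main obstacle is the two-level numeric tuning: the threshold $s_i = w$ for ``killed'' cliques and the missing-edge accounting in $\tilde G$ must both be calibrated tightly enough to leave a positive margin over $w^2/2$, and the constant $1/16$ in the lemma is precisely what this scheme supports, with $89/120 > 1/2$ being the final inequality that closes the argument.
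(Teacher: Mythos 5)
Your proof is correct, but it takes a genuinely different route from the paper's. The paper proceeds by exhibiting an explicit bramble ${\cal B}_w$ (the connected sets $B(i,Z)$ with $|Z|=w/2$) and then shows, by a counting argument with the auxiliary set $F_S$, that no hitting set of size $<\frac{1}{16}w^2$ can cover it, invoking the Seymour--Thomas bramble duality directly. You instead invoke the balanced-separator reformulation of treewidth and show that every $S$ with $|S|<\frac{1}{16}w^2$ leaves a component of $H_w\setminus S$ of size $>w^2/2$. The heart of your argument is the auxiliary graph $\tilde G$ on the ``good'' cliques (those with $Q_i\not\subseteq S$), with an edge $ij$ when both $(i,j)$ and $(j,i)$ survive: since each vertex of $S$ destroys at most one transpose pair, $\tilde G$ is missing at most $|S|$ edges, and the degree-sum bound gives a component $C^*$ of size at least $|A|-2|S|/|A|$. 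Lifting $C^*$ back gives a connected set $\bigcup_{i\in C^*}(Q_i\setminus S)$ in $H_w\setminus S$, which by your accounting has size $>\frac{89}{120}w^2>w^2/2$. I rechecked the arithmetic: $|A|\geq\frac{15}{16}w$, $|C^*|>\frac{15}{16}w-\frac{2}{15}w=\frac{193}{240}w$, and $\frac{193}{240}w^2-\frac{15}{240}w^2=\frac{89}{120}w^2$, all correct. (Minor nit: the union is a connected set, not necessarily a full component, but that only strengthens the bound; and a couple of your $\geq$ signs should be strict, which does not affect the conclusion.) One small advantage of your approach is that it needs no parity assumption on $w$, whereas the paper's proof assumes $w$ even for notational convenience. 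The paper's bramble construction has the virtue of being self-contained modulo the bramble--treewidth duality, which the paper already cites; your version relies on the balanced-separator characterization, which is an equivalent but distinct formulation.
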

\begin{proof}
For notational convenience, we assume that $w$ is even. The argument can be easily extended to the case when $w$ is odd. For $i\in [w]$ and a set $Z\subseteq [w]$, let $B(i,Z)$ denote the set $\{(i,j),(j,i):j\in Z\}$. We define ${\cal B}_w$ as \[{\cal B}_w=\{G[B(i,Z)]:\forall i\in [w], \ \forall Z\subseteq [w]\setminus \{i\} \text{~s.t.~} |Z|=w/2\}.\]

It is easy to verify that each subgraph of ${\cal B}_w$ is connected. For any $i\in [w]$ and $Z\subseteq [w]\setminus \{i\}$ such that $|Z|=\frac{1}{2}w$, the number of cliques $Q_i$, $1\leqslant i \leqslant w$, with which $B(i,Z)$ has non-empty intersection is at least $\frac{1}{2}w+1$. This means any two elements of ${\cal B}_w$ touch each other, and thus ${\cal B}_w$ is indeed a bramble. Henceforth, we show that the order of ${\cal B}_w$ is at least $\frac{1}{16}w^2$.

Suppose that there is a hitting set $S$ of ${\cal B}_w$ with $|S|<\frac{1}{16}w^2$. We define
$$F_{S}=\{i\in[w]\mid |\{j\in [w]: (j,i)\in V(G)\setminus S\}|\geqslant \frac{3}{4}w\}.$$

\begin{claimN}\label{Fsize}
$|F_{S}|>\frac{3}{4}w$.
\end{claimN}
\begin{proofof}
Suppose that the contrary holds. We use a counting argument to derive a contradiction. The set $V(G)\setminus S$ is partitioned into two sets: $\{(j,i): j\in [w],i\in F_{S}\}$ and $\{(j,i): j\in [w],i\not\in F_{S}\}$. We have
\[
|V(G)\setminus S| \leqslant w\cdot |F_{S}| + \frac{3}{4}w\cdot (w-|F_{S}|)
						\leqslant \frac{3}{4}w^2 + \frac{3}{16}w^2=\frac{15}{16}w^2,
\]
contradicting to the assumption that $|S|<\frac{1}{16}w^2$.
\end{proofof}

\begin{claimN}\label{bigrow}
There exists $i^*\in F_{S}$ such that $|\{j\in [w]:(i^*,j)\in  S\}|<\frac{1}{4}w$.
\end{claimN}
\begin{proofof}
Suppose the contrary, i.e. we have $|\{j\in [w]:(i,j)\in  V(G)\setminus S\}|\leqslant \frac{3}{4}w$ for every $i\in F_{S}$. Notice that the set $V(G)\setminus S$ is partitioned into $\{(i,j):i\in F_{S}, j\in [w]\}$ and $\{(i,j):i\in [w]\setminus F_{S}, j\in [w]\}$. Then,
\[|V(G)\setminus S| \leqslant |F_{S}|\cdot \frac{3}{4}w + (w-|F_{S}|)\cdot w \leqslant w^2 -\frac{1}{4}w\cdot |F_{S}| < \frac{13}{16}w^2, \]
where the last inequality follows from Claim~\ref{Fsize}. This contradicts the assumption that $|S|<\frac{1}{16}w^2$.
\end{proofof}

Consider some $i^*\in F_{S}$ satisfying the condition of Claim~\ref{bigrow}. We observe that the set \[Z=\{j\in [w]:(j,i^*)\in V(G)\setminus S\}\setminus (\{i^*\}\cup  \{j\in [w]:(i^*,j)\in S\})\]
contains at least $\frac{1}{2}w$ vertices by the definition of $F_{S}$ and Claim~\ref{bigrow}. Pick any subset $Z^*$ of $Z$ of size exactly $\frac{1}{2}w$. To reach a contradiction, it suffices to show that $B(i^*,Z^*)\cap S= \emptyset$. Indeed, from the fact that $Z^* \subseteq \{j\in [w]:(j,i^*)\in V(G)\setminus S\}$, it follows that
\begin{eqnarray}
 \forall j\in Z^*\  (j,i^*)\in V(G)\setminus S.\label{this}
 \end{eqnarray}
By the definition of $Z$ it follows that  $Z^*\cap \{j\in [w]:(i^*,j)\in S\}=\emptyset,$ which, implies that
 \begin{eqnarray}
 \forall j\in Z^*\  (i^*,j)\in V(G)\setminus S.\label{that}
 \end{eqnarray}
 By~\eqref{this} and~\eqref{that}, we conclude that $B(i^*,Z^*)\cap S= \emptyset$.
 This completes the proof. \end{proof}

From Lemmata~\ref{rji3rlp} and~\ref{l9l9f3j5}, we conclude to the following.

\begin{theorem}
\label{theokg8h}
For every $w\in {\mathbb N}_{\geqslant 1}$ there exists a graph $H_{w}$ such that
$\tw(H_{w})={\rm \Omega}((\tcw(H_{w}))^2)$.
\end{theorem}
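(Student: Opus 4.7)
The plan is simply to combine the two pointwise bounds for the family $\{H_w : w \in \mathbb{N}_{\geq 1}\}$ that was constructed and analysed just before the theorem. Lemma~\ref{rji3rlp} gives the upper bound $\tcw(H_w) \leq w+1$ via the explicit star-shaped tree-cut decomposition whose bags are the cliques $Q_1,\ldots,Q_w$, and Lemma~\ref{l9l9f3j5} gives the lower bound $\tw(H_w) \geq \tfrac{1}{16}w^2 - 1$ via the bramble ${\cal B}_w$ of order $\geq \tfrac{1}{16}w^2$ and the Seymour--Thomas duality. Both bounds being at hand, the theorem becomes a purely arithmetic consequence.

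Concretely, I would rewrite the bound of Lemma~\ref{rji3rlp} as $w \geq \tcw(H_w) - 1$, substitute into the bound of Lemma~\ref{l9l9f3j5}, and obtain
\[
\tw(H_w) \;\geq\; \frac{w^2}{16} - 1 \;\geq\; \frac{(\tcw(H_w)-1)^2}{16} - 1.
\]
Since $\tcw(H_w) \to \infty$ as $w \to \infty$ (indeed $\tcw(H_w) \geq \tcw(K_w) \to \infty$, or simply because $H_w$ contains $w$ disjoint $w$-cliques whose tree-cut width grows with $w$), the right-hand side is ${\rm \Omega}((\tcw(H_w))^2)$, which is precisely the asymptotic statement of the theorem. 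Together with Proposition~\ref{tfdwtcw}, this certifies that the quadratic relation $\tw = O(\tcw^2)$ is tight up to the multiplicative constant.

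I do not foresee any genuine obstacle, as all the combinatorial content has already been extracted in the two preceding lemmata; the only minor care needed is to make sure the constants do not degenerate, but $\tfrac{1}{16}(w+1)^2 \leq \tfrac{1}{8}w^2$ for $w$ large enough, so any constant strictly smaller than $\tfrac{1}{16}$ works in the ${\rm \Omega}$-estimate. Hence the proof amounts to a one-line calculation citing Lemmata~\ref{rji3rlp} and~\ref{l9l9f3j5}.
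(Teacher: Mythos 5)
Your proposal is correct and is essentially identical to the paper's proof, which simply states that the theorem follows directly from Lemmata~\ref{rji3rlp} and~\ref{l9l9f3j5}. The extra arithmetic you spell out (substituting $w\geqslant\tcw(H_w)-1$ and noting $\tcw(H_w)\to\infty$) is exactly the implicit calculation the paper leaves to the reader.
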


\section{The 2-approximation algorithm}
\label{sec:algorithm}

We present a 2-approximation of {\sc Tree-cut Width} running in time $2^{O(w^2\log w)}\cdot n^2$.
As stated in Lemma~\ref{e92mfsd} below, we first observe that computing the tree-cut width of $G$ reduces to computing the tree-cut width of 3-edge-connected graphs. This property can be easily derived from~\cite[Lemmas 10--11]{Wollan15thes}.

\begin{lemma}\label{e92mfsd}
Given a connected graph $G$, let $\{V_1,V_2\}$ be a partition of $V(G)$ such that $\delta_{G}(V_1,V_2)$ is a minimal cut of size at most two and let $w\geqslant 2$ be a positive integer. For $i=1,2$, let $G_i$ be the graph obtained from $G$ by identifying the vertex set $V_{3-i}$ into a single vertex $v_{3-i}$. Then $G$ has tree-cut width at most $w$ if and only if both $G_1$ and $G_2$ have tree-cut width at most $w$.
\end{lemma}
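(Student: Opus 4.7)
The plan is to establish both implications by constructing tree-cut decompositions explicitly, relying on Wollan's Lemmas 10--11 of \cite{Wollan15thes} to assert that every graph admits an optimal tree-cut decomposition that \emph{respects} every minimal cut of size at most two, in the sense that such a cut coincides with the adhesion of some tree-edge. The hypothesis that $\delta_{G}(V_{1},V_{2})$ is a \emph{minimal} cut of size at most two is used to ensure both that $G[V_{1}]$ and $G[V_{2}]$ are connected and that this cut can be aligned with a tree-edge; the hypothesis $w\geqslant 2$ is used precisely to absorb the adhesion of that tree-edge.

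For the direction $\tcw(G)\leqslant w\Rightarrow \tcw(G_{i})\leqslant w$, I would start from an optimal decomposition $(T,\mathcal{X})$ of $G$ containing an edge $e^{*}=\{u_{1},u_{2}\}\in E(T)$ such that, denoting by $T_{u_{j}}$ the component of $T\setminus e^{*}$ containing $u_{j}$, one has $\bigcup_{t\in V(T_{u_{j}})}X_{t}=V_{j}$ for $j=1,2$. From this I would build a tree-cut decomposition of $G_{i}$ by contracting the entire subtree $T_{u_{3-i}}$ into a single node $t^{*}$, setting $X'_{t^{*}}=\{v_{3-i}\}$, and leaving all bags inside $T_{u_{i}}$ unchanged. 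The width check is then short: for edges inside $T_{u_{i}}$ the adhesion is preserved because edges of $G$ between two $V_{i}$-endpoints are copied verbatim to $G_{i}$, while edges of $\delta_{G}(V_{1},V_{2})$ crossing such a tree-edge become edges incident to $v_{3-i}$ in $G_{i}$ without changing the cut size; the adhesion of the new edge incident to $t^{*}$ equals $|\delta_{G}(V_{1},V_{2})|\leqslant 2\leqslant w$; the torso at every node $t\in V(T_{u_{i}})\setminus\{u_{i}\}$ is identical to the one in $(T,\mathcal{X})$; and at $u_{i}$ the whole $T_{u_{3-i}}$-side was anyway identified into a single vertex inside $H_{u_{i}}$, so the $3$-center is unchanged.

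For the converse, given optimal tree-cut decompositions $(T_{j},\mathcal{X}_{j})$ of $G_{j}$ of width at most $w$ for $j=1,2$, I would locate the unique nodes $t_{j}\in V(T_{j})$ with $v_{3-j}\in X_{j,t_{j}}$, take $T$ to be the disjoint union $T_{1}\cup T_{2}$ joined by a new edge $\{t_{1},t_{2}\}$, and define $X_{t}=X_{j,t}\setminus\{v_{3-j}\}$ for every $t\in V(T_{j})$. The verifications are dual: the new edge has adhesion $\delta_{G}(V_{1},V_{2})$ of size at most $2\leqslant w$; every other tree-edge has adhesion equal to the corresponding one in $G_{j}$ (because the edges of $\delta_{G}(V_{1},V_{2})$ incident to the cluster on the ``other side'' of the new edge are exactly the edges that $v_{3-j}$ contributes in $G_{j}$); and at every node $t\in V(T_{j})$ the torso in the new decomposition of $G$ is isomorphic to $\bar{H}_{j,t}$, since identifying the $T_{3-j}$-side of $G$ into a single cluster vertex produces exactly the same local graph as having the vertex $v_{3-j}$ inside the bag $X_{j,t_{j}}$ of $G_{j}$.

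The main obstacle I anticipate is the careful tracking of the torso and of its $3$-center at the nodes $u_{i}$ (forward direction) and $t_{j}$ (backward direction), since the $3$-center is defined through repeated dissolution and its size is sensitive to whether a neighbor of $X_{t}$ actually lies inside the torso or is the result of an identification. The reason this should go through is that in both constructions the operation of identifying a whole side into a single vertex is already built into the definition of the torso, so the contraction/expansion of the $V_{3-i}$-side commutes with the dissolution step, leaving $|V(\bar{H}_{t})|$ unchanged on both sides.
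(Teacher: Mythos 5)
Your backward direction essentially matches the paper: both take the disjoint union of the two decompositions, add a new tree-edge between the nodes whose bags contain $v_1$ and $v_2$, delete those two vertices from their bags, and observe that the new edge has adhesion at most $2\leqslant w$ while all other adhesions and all $3$-centers are unchanged.

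The forward direction, however, rests on a claim that you do not justify and that the paper avoids. You assert that Wollan's Lemmas~10--11 yield an optimal tree-cut decomposition of $G$ in which $\delta_G(V_1,V_2)$ is realized as the adhesion of some tree-edge $e^*$ whose two sides are exactly $V_1$ and $V_2$. That ``alignment'' (or normalization) property is not what those lemmas say --- the paper invokes Lemma~11 only as immersion-monotonicity of $\tcw$ --- and proving it from scratch would be a nontrivial normalization statement about tree-cut decompositions; in fact the most natural derivation of it is to combine the two directions of the very lemma you are proving, which makes the argument circular. The paper's forward direction sidesteps this entirely: it shows directly that each $G_i$ is an immersion of $G$ (if $|\delta_G(V_1,V_2)|=1$, delete all of $V_{3-i}$ except the unique neighbor of $V_i$; if $|\delta_G(V_1,V_2)|=2$, minimality of the cut forces $G[V_{3-i}]$ to be connected, so one can delete vertices and edges and lift the edges along a path between the two attachment vertices), and then $\tcw(G_i)\leqslant\tcw(G)\leqslant w$ follows at once from immersion-monotonicity. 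If you wish to keep your route, you would need to state and prove the alignment lemma as a separate result, which is considerably more work than the immersion argument; as written, the forward direction has a genuine gap.
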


\begin{proof}
Recall that $\textbf {tcw}(H)\leqslant \textbf{tcw}(G)$ if $G$ admits an immersion of $H$ by~\cite[Lemma 11]{Wollan15thes}. Hence, in order to prove the forward implication, it suffices to prove that $G_i$ is an immersion of $G$, for $i=1,2$. If $|\delta(V_1,V_2)|=1$, for each $i=1,2$ we can delete all vertices of $V_{3-i}$ except for the single vertex in $N(V_i)$ and obtained $G_i$. If $|\delta(V_1,V_2)|=2$, note that for each $i=1,2$, $G[V_{3-i}]$ is connected and thus $G_i$ can be obtained by deleting vertices, edges and lifting a sequence of edges along the path between two vertices in $N(V_i)$.

Conversely, let $(T^i,{\cal X}^i)$ be a tree-cut decomposition of $G_i$ of width at most $w$ for $i=1,2$, and consider the tree-cut decomposition $(T,{\cal X})$ such that ${\cal X}={\cal X}_1 \cup {\cal X}_2$ and $T$ is obtained by the disjoint union of $T^1$ and $T^2$ after adding an edge between $t_1\in V(T^1)$ and $t_2 \in V(T^2)$, where $t_i$ is the tree node of $T_i$ containing $v_{3-i}$, i.e. the vertex obtained by contracting $V_{3-i}$. We remove $v_1$ and $v_2$ from the bags of $T$.

We claim that $(T,{\cal X})$ is a tree-cut decomposition of width at most $w$. Note first that the adhesion of $(T,{\cal X})$ is at most $w$ since $|\delta^{T}(\{t_1,t_2\})|\leqslant 2$ and the adhesion of $(T^{i}, {\cal X}^i)$ is at most $w$ for $i=1,2$. From $|\delta^{T}(\{t_1,t_2\})|\leqslant 2$, it follows that for $i=1,2$, the 3-center of $(H_{t_i},X_{t_i})$ of the tree-decomposition $(T,{\cal X})$ is the same as the 3-center of $(H_{t_i},X_{t_i})$ of the tree-decomposition $(T^{i},{\cal X}^i)$. Therefore the width of $(T,{\cal X})$ is at most $w$.
\end{proof}

The proof of the next lemma is easy  and is omitted.

\begin{lemma}
\label{i8i30eieue89}
Let $G$ be a graph and let $v$ be a vertex of $G$ with degree 1 (resp. 2). Let also $G'$ be the graph
obtained from $G$ after removing (resp. dissolving) $v$. Then $\tcw(G)=\tcw(G')$.
\end{lemma}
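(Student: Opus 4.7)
The plan is to prove the lemma by double inequality. Since $G'$ is an immersion of $G$---a degree-$1$ vertex can simply be deleted (taking a subgraph), and a degree-$2$ vertex with two distinct neighbors can be eliminated by a single lift, which is exactly the dissolving operation---the inequality $\tcw(G') \leq \tcw(G)$ follows from the monotonicity of tree-cut width under immersions proved by Wollan~\cite[Lemma 11]{Wollan15thes}, already invoked in the proof of Lemma~\ref{e92mfsd}.

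For the reverse inequality, I would take an optimal tree-cut decomposition $(T', {\cal X}')$ of $G'$ of width $w = \tcw(G')$ and modify it to accommodate $v$. If $v$ has degree $1$ with unique neighbor $u$ sitting in bag $X'_{t_u}$, I attach a new leaf $s$ to $t_u$ in $T'$ and set $X_s = \{v\}$. If $v$ has degree $2$ with distinct neighbors $u_1, u_2$ living in bags $X'_{t_1}, X'_{t_2}$, and $t_1 = t_2$, I attach a new leaf $s$ with $X_s = \{v\}$ to $t_1$; in this subcase the coincidence of $u_1, u_2$ in the same bag already forces $w \geq |V(\bar{H}'_{t_1})| \geq 2$. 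If $t_1 \neq t_2$, I subdivide any tree edge $e' = \{p_i, p_{i+1}\}$ on the $t_1$--$t_2$ path of $T'$ by inserting a new node $s$ with $X_s = \{v\}$.

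The verification that $(T,{\cal X})$ has width at most $w$ is a routine case check. For adhesions, the tree edges incident to $s$ carry only the edges $\{vu_1, vu_2\}$ in the leaf-attach cases (hence adhesion at most $\deg_G(v) \leq 2 \leq w$), and in the subdivision case each of the two new edges has adhesion equal to the old adhesion $\alpha$ of $e'$, because the copy of $u_1u_2$ that crosses $e'$ in $G'$ is replaced in $G$ by exactly one of the edges $vu_i$ crossing the new tree edge. Every old tree edge preserves its adhesion for the same reason, which crucially depends on $s$ being placed so that $v$ and at least one of $u_1, u_2$ end up on the same side of the edge. For torsos, the new leaf $s$ yields $\bar{H}_s = \{v\}$ after the (at most two) contracted vertices are dissolved or removed during the $3$-center reduction, and for every other node $t$ the torso $\bar{H}_t$ coincides with $\bar{H}'_t$, since the contracted subgraph that now contains $v$ has $v$ sitting either as a degree-$1$ tail (eliminated) or as a dissolvable subdivision of an edge already present in $H'_t$.

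The only point requiring care is the subdivision subcase where $u_1, u_2$ end up on opposite sides of $e'$: one must verify that $\bar{H}_s$ does not exceed $w$. The graph $H_s$ has only three vertices---$v$ and the two contracted sides $z_1, z_2$---so its $3$-center has size $1$ when the old adhesion $\alpha$ of $e'$ is at most $2$, and size $3$ otherwise, in which case $w \geq \alpha \geq 3$ so the bound holds. Everything else reduces to the observation that $v$'s only edges in $G$ are $vu_1, vu_2$, hence every contracted subgraph containing $v$ differs from its $G'$-counterpart only by an internal subdivision, which the $3$-center operation erases.
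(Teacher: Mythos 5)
The paper does not actually supply a proof of this lemma---it states only that the proof ``is easy and is omitted''---so there is no reference argument to compare against. Your double-inequality proof is correct and is the natural one: the direction $\tcw(G')\leq\tcw(G)$ is immersion monotonicity, and the direction $\tcw(G)\leq\tcw(G')$ builds an explicit decomposition of $G$ from an optimal one of $G'$ by attaching a new leaf with bag $\{v\}$ (degree~$1$, or degree~$2$ with both neighbours in one bag) or subdividing a tree edge on the $t_1$--$t_2$ path (degree~$2$ with neighbours in distinct bags). The bookkeeping is right where it needs to be: across every tree edge the copy of $u_1u_2$ added by dissolution is traded for exactly one of $vu_1,vu_2$, so every old adhesion is preserved and the two new adhesions equal $\alpha=|\delta^{T'}(e')|$; and for torsos, whenever $v$ shows up as a standalone contracted vertex (which happens only at the unique neighbour of $s$ in the leaf-attach cases) it is a degree-$1$ or a dissolvable degree-$2$ vertex outside the bag, so the $3$-center collapses $H_t$ to $H'_t$, while at every other node the contracted sides already agree. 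Your handling of $\bar H_s$ in the subdivision subcase (size~$3$ when $\alpha\geq 3$, but then $w\geq\alpha\geq 3$) closes the only nontrivial gap. The one wrinkle is that you first assert $\bar H_s=\{v\}$ unconditionally and only afterwards correct this for the subdivision subcase; that is a wording issue rather than a gap, since the correction is present. Overall the argument is sound.
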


From now on, based on Lemmata~\ref{e92mfsd} and~\ref{i8i30eieue89},
we assume that the input graph is $3$-edge-connected. In this special case, the following observation is not difficult to verify. It allows us to work with a slightly simplified definition of the $3$-centers in a tree-cut decomposition.

\begin{observation}\label{sldjflk34}
Let $G$ be a 3-edge-connected graph and let  $(T,\cal{X})$ be a
tree-cut decomposition of $G$. Consider an arbitrary node $t$ of $V(T)$ and let ${\cal T}$ be the set
containing every connected component $T'$ of $T\setminus t$ such that
$\bigcup_{s\in V(T')}X_{s}\neq \emptyset.$
Then $ |V(\bar{H}_t)|=|X_t|+|{\cal T}|,$ that is $|V(\bar{H}_{t})|=|V({H}_{t})|$.
\end{observation}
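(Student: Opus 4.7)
The plan is to show that in the $3$-edge-connected setting, the $3$-center operations strip nothing from $H_t$, so $\bar H_t = H_t$ as graphs, which immediately gives the vertex-count identity. First I would unpack the vertex set of $H_t$: by construction it is the disjoint union $X_t \sqcup \{z_i : T_i \in \mathcal{T}\}$, where the $z_i$ are the identification vertices coming from the non-empty unions $Z_i = \bigcup_{s \in V(T_i)} X_s$ with $T_i \in \mathcal{T}$; components of $T \setminus t$ with empty union contribute nothing. Hence $|V(H_t)| = |X_t| + |\mathcal{T}|$, and the claim reduces to showing $V(\bar H_t) = V(H_t)$.

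Next I would note that both operations defining the $3$-center act only on vertices of $V(H_t) \setminus X_t$, which is exactly the set $\{z_i : T_i \in \mathcal{T}\}$. So it suffices to verify that no $z_i$ is ever dissolved or removed. For this, I would compute $\deg_{H_t}(z_i)$: each edge of $G$ with both endpoints in $Z_i$ becomes a self-loop (discarded) and each edge with exactly one endpoint in $Z_i$ contributes $1$ to the degree of $z_i$, so $\deg_{H_t}(z_i) = |\delta_G(Z_i, V(G) \setminus Z_i)|$. Provided $Z_i \neq \emptyset$ and $V(G) \setminus Z_i \neq \emptyset$, the $3$-edge-connectivity of $G$ forces this cut to have at least $3$ edges, so $\deg_{H_t}(z_i) \geq 3$. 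Since neither dissolution (degree $2$, two neighbors) nor removal (degree $\leq 2$, one neighbor) can be triggered by a vertex of degree at least $3$, no operation is ever applicable to any $z_i$, and I can conclude $\bar H_t = H_t$.

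The one remaining wrinkle, which I expect to be the only subtle step, is the degenerate case when $V(G) \setminus Z_i = \emptyset$ for some $z_i$. This forces $X_t = \emptyset$, $\mathcal{T} = \{T_i\}$, and $V(G) = Z_i$, so $H_t$ consists of a single vertex $z_i$ with no incident edges. In that situation $z_i$ has degree $0$ and zero neighbors, so neither the dissolution rule (which requires exactly two neighbors) nor the removal rule (which requires exactly one neighbor) applies, and again $\bar H_t = H_t$. Combining the two cases yields $|V(\bar H_t)| = |V(H_t)| = |X_t| + |\mathcal{T}|$, completing the proof.
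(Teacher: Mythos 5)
Your proof is correct. The paper simply states that the observation ``is not difficult to verify'' and provides no proof, so there is nothing to compare against in terms of technique; your argument is exactly the natural one that the authors presumably had in mind. The key points — that $V(H_t) = X_t \sqcup \{z_i : T_i \in \mathcal{T}\}$, that the $3$-center operations act only on the $z_i$'s since they are precisely the vertices outside $X_t$, and that $\deg_{H_t}(z_i) = |\delta_G(Z_i, V(G)\setminus Z_i)| \geq 3$ by $3$-edge-connectivity whenever $Z_i$ is a proper non-empty subset — together show that no dissolution or removal is applicable, so $\bar H_t = H_t$. You are also right to flag and dispose of the degenerate case $Z_i = V(G)$ separately: there $H_t$ is a single isolated vertex, and since the dissolution rule requires exactly two neighbors and the removal rule requires exactly one neighbor, a degree-$0$ vertex survives. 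This is a careful completion of a step the paper leaves to the reader.
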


We observe that the proof of Lemma~\ref{e92mfsd} provides a way to construct a desired tree-cut decomposition for $G$ from decompositions of smaller graphs. Given an input graph $G$ for {\sc Tree-cut Width}, we find a minimal cut $(V_1,V_2)$ with $|\delta(V_1,V_2)|\leqslant 2$ and create a graph $G_i$ as in Lemma~\ref{e92mfsd}, with the vertex $v_{3-i}$ marked as distinguished. We recursively find such a minimal cut in the smaller graphs created until either one becomes 3-edge-connected or has at most $w$ vertices.

Therefore, a key feature of an algorithm for {\sc Tree-cut Width} lies in how to handle 3-edge-connected graphs. Our algorithm iteratively refines a tree-cut decomposition $(T,\mathcal{X})$ of the input graph $G$ and either guarantees that the following invariant is satisfied or returns that $\tcw(G)>\omega$.

\begin{quote}
\hspace{-.6cm}{\sl Invariant:} {\sl $(T,\mathcal{X})$ is a tree-cut decomposition of $G$ where $\intcw(T,\mathcal{X})\leqslant 2\cdot w$}.
\end{quote}

Clearly the trivial tree-cut decomposition satisfies  the {\sl Invariant}. A leaf $t$ of $T$ such that  $|X_t|\geqslant 2\cdot \omega$ is called a {\em large leaf}. At each step, the algorithm picks a large leaf  and refines
 the current tree-cut decomposition by breaking this leaf bag into smaller pieces. The process repeats until we
 finally obtain a tree-cut decomposition of width at most $2w$, or encounter a certificate that $\textbf {tcw}(G)>w$.



\subsection{Refining a large leaf of a tree-cut decomposition}
\label{sec:specialpart}

A large leaf will be further decomposed into a star. To that aim, we will solve the following problem:


\medskip
\noindent {\sc Constrained Star-Cut Decomposition}\\
\noindent {\sl Input:}  An 
undirected graph $G$, an integer $w\in\Bbb{N}$,  a set $B \subseteq V(G)$, and a weight function $\gamma: B\rightarrow \Bbb{N}$. 

\noindent {\sl Parameter:} $w$.\\
\noindent{\sl Output:}  A non-trivial tree-cut decomposition $(T,\mathcal{X})$ of $G$ such that
\begin{enumerate}
\item $T$ is a star with central node $t_c$ and
with $\ell$ leaves for some $\ell\in \Bbb{N}^+$,
\item $\intcw(T,\mathcal{X})\leqslant w$, and
\item $|X_{t_c}|+\ell\leqslant w$ and for every leaf node $t$, $\gamma(B\cap X_t) \leqslant w$,
\end{enumerate}
or report that such a tree-cut decomposition does not exist.

\medskip
Observe that a {\sc Yes}-instance satisfies, for every $x\in B$, $\gamma(x)\leqslant w$.
We also notice that as the output of the algorithm is a non-trivial tree-cut decomposition, $T$ contains at least two nodes with non-empty bags and every leaf node is non-empty.

Given a subset $S\subseteq V(G)$, we  define the instance of the \textsc{Constrained Star-Cut
Decomposition} problem $I(S,G)=(G[S],w,\partial_G(S),\gamma_S)$ where
for every $x\in \partial_G(S)$,
$\gamma_S(x)=|\delta_{G}(\{x\},V(G)\setminus S)|$.

\begin{lemma} \label{justifs}
Let $G$ be a 3-edge-connected graph, $w\in\Bbb{Z}_{\geqslant 2}$, and let $S\subseteq V(G)$ be a set of vertices such that $|S|\geqslant w+1$ and $|\delta_{G}(S,V(G)\setminus S)|\leqslant 2w$.
If ${\bf tcw}(G)\leqslant w$, then $I(S,G)=(G[S],w,\partial_G(S),\gamma_S)$ is a {\sc Yes}-instance of {\sc Constrained Star-Cut Decomposition}.
\end{lemma}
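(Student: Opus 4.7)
The plan is to derive the desired star decomposition of $G[S]$ from a tree-cut decomposition $(T^{*},\mathcal{X}^{*})$ of $G$ of width at most $w$ (which exists by hypothesis) by selecting a well-chosen ``center'' node $t_c^{*}\in V(T^{*})$ and collapsing each component of $T^{*}\setminus t_c^{*}$ into a single leaf bag, intersected with $S$. Since the bags of $(T^{*},\mathcal{X}^{*})$ partition $V(G)$, the bags produced this way automatically partition $S$, so $(T,\mathcal{X})$ will be a valid tree-cut decomposition of $G[S]$.

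To choose $t_c^{*}$, I would associate with every $t^{*}\in V(T^{*})$ the weight $q(t^{*}):=|\delta_G(X_{t^{*}}\cap S,\,V(G)\setminus S)|$ and take $t_c^{*}$ to be a centroid of $T^{*}$ with respect to $q$. Since $W:=\sum_{t^{*}}q(t^{*})=|\delta_G(S,V(G)\setminus S)|\leq 2w$, a standard tree-centroid argument produces a node $t_c^{*}$ such that every component $T'$ of $T^{*}\setminus t_c^{*}$ satisfies $\sum_{s\in V(T')}q(s)\leq W/2\leq w$. In the degenerate case $W=0$, the 3-edge-connectivity of $G$ combined with $S\ne\emptyset$ forces $S=V(G)$, and I would instead take $t_c^{*}$ to be any node with $X_{t_c^{*}}\ne\emptyset$. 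I then set $X_{t_c}:=X_{t_c^{*}}\cap S$ and, for every component $T'_i$ of $T^{*}\setminus t_c^{*}$ whose bag-union meets $S$, introduce a leaf $t_i$ of the star carrying $X_{t_i}:=S\cap\bigcup_{s\in V(T'_i)}X_{s}$.

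Verifying the three numerical requirements is then routine. The adhesion of a leaf edge $(t_c,t_i)$ in $G[S]$ is at most the adhesion $|\delta^{T^{*}}(e_{T'_i})|\leq w$ of the corresponding tree-edge of $T^{*}$, since any edge of $G[S]$ separated by $t_i$ is also an edge of $G$ separated by that tree-edge. Observation~\ref{sldjflk34} (applicable because $G$ is 3-edge-connected) gives $|V(\bar H_{t_c^{*}})|=|X_{t_c^{*}}|+|\mathcal{T}|\leq w$, where $\mathcal{T}$ denotes the set of non-empty components of $T^{*}\setminus t_c^{*}$; together with $|X_{t_c}|\leq|X_{t_c^{*}}|$ and $\ell\leq|\mathcal{T}|$, this yields $|X_{t_c}|+\ell\leq w$, which in turn bounds $|V(\bar H_{t_c})|$ in the star. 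Finally, for each leaf $t_i$, $\gamma_S(B\cap X_{t_i})=|\delta_G(X_{t_i},V(G)\setminus S)|$ equals the total $q$-weight of $T'_i$, which is at most $w$ by the centroid property.

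The most delicate point, and the one I would treat most carefully, is the non-triviality of $(T,\mathcal{X})$. From $|X_{t_c^{*}}|\leq w<|S|$ we already get $\ell\geq 1$. The only potentially dangerous situation is $\ell=1$ together with $X_{t_c^{*}}\cap S=\emptyset$: in that case all $S$-vertices, and hence all $q$-weight, would lie in the unique non-empty component $T'_1$, which would force $W\leq W/2$ and therefore $W=0$. But $W=0$ is precisely the scenario in which we took the alternative choice $X_{t_c^{*}}\ne\emptyset$. Hence $(T,\mathcal{X})$ has at least two non-empty bags and meets every requirement of the \textsc{Constrained Star-Cut Decomposition} problem, concluding the proof.
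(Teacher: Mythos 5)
Your proof is correct, and it takes a genuinely different (though structurally related) route from the paper's. Both arguments start from an optimal tree-cut decomposition $(T^*,\mathcal{X}^*)$ of $G$ of width at most $w$, identify a suitable center node, and collapse the components of $T^*\setminus t_c^*$ into star leaves intersected with $S$; the difference is entirely in how the center is located. The paper orients the edges of $T^*$ by two local rules (orient $e=\{x,y\}$ from $x$ to $y$ when $\gamma_S'(T_y)>w$, and also when the $x$-side contains no $S$-vertex), proves via its Claim~\ref{cl:orient} that no edge gets two orientations, and takes $t_c^*$ to be any sink of the resulting partial orientation. You instead place the edge-boundary weight $q(t^*)=|\delta_G(X_{t^*}\cap S,V(G)\setminus S)|$ on each node and take a weighted centroid, which gives the even sharper bound $\gamma_S(T'_i)\le W/2\le w$ for every component. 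This is arguably more standard and elementary, but it pays a small price: the centroid gives no control over whether $t_c^*$ sits ``near'' $S$, which is precisely what the paper's Rule~2 enforces, so you are forced to special-case $W=0$ (equivalently $S=V(G)$) and to argue separately that $\ell=1$ with $X_{t_c^*}\cap S=\emptyset$ collapses $W$ to $0$. You handle both of these correctly, so the two proofs establish the same conclusion, and each captures the non-triviality of the output star in a different but equally valid way.
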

\begin{proof}
 Let $(T,{\cal X})$ be a normalized tree-cut decomposition of $G$ of width at most $w$.
 We extend the weight function $\gamma_S$ on $\partial_G(S)$ into $\gamma'_S$ on $V(G)$ by setting $\gamma'_S(v)=\gamma_S(v)$ for
  every $v\in S$ and $\gamma'_S(v)=0$ otherwise. Also, given a subtree $T'$ of $T$, we let $\gamma_S'(T')=\sum_{t\in V(T')} \sum_{v\in X_t} \gamma_S'(v)$. The idea is to identify a node $t_c$ of $T$ that will serve as the central node of the star decomposition. The leaves of the star decomposition will results from the contraction of the subtrees of $T\setminus t_c$ containing bags that intersect the set $S$. To find the node $t_c$,  we orient the edges of $T$ using the following two rules. Given an edge $e=\{x,y\}\in E(T)$:
\begin{enumerate}\setlength\itemsep{.2em}
\item[] {\bf {\bf Rule 1}}: orient $e$ from $x$ to $y$ if $\gamma_S'(T_y)>w$.
\item[] {\bf {\bf Rule 2}}: orient $e$ from $x$ to $y$ if $S\cap \bigcup_{t\in V(T_x)}X_t=\emptyset$.
\end{enumerate}

Let $\vec{T}$ be the resulting orientation of $T$. Observe that Rule 1 and 2 may leave some edges of $T$ non-oriented.

\begin{claimN} \label{cl:orient}
For every edge $e=\{x,y\}$ of $T$, $e$ is oriented either in a single direction or not oriented in $\vec{T}$.
\end{claimN}
\begin{proofof}
Observe that if {\bf Rule 1} orients $e$ from $x$ to $y$, neither {\bf {\bf Rule 1}} nor {\bf {\bf Rule 2}} may orient $e$ in the opposite direction. The former is an immediate consequence of the fact $\gamma_S'(T_x)+\gamma_S'(T_y)=|\delta_{G}(S,V(G)\setminus S)|\leqslant 2w$. {\bf {\bf Rule 2}} does not orient $e$ from $y$ to $x$ either: if {\bf {\bf Rule 2}} does so, we have $S\cap \bigcup_{t\in V(T_y)}X_t=\emptyset$ and since the value $\gamma_S'(v)$ is non-zero only when $v\in S$, we conclude that $\gamma_S(T_y)=0$, a contradiction to the assumption that {\bf Rule 1} oriented $e$ from $x$ to $y$.  Moreover, the edge $e$ cannot be oriented in both directions by {\bf Rule 2} since $S$ is non-empty and thus at least one of the sets $\bigcup_{t\in V(T_x)}X_t$ and $\bigcup_{t\in V(T_y)}X_t$ intersects with $S$.
\end{proofof}

By Claim~\ref{cl:orient}, $\vec{T}$ contains at least one node, say $t_c$, which is not incident to an out-going edge in $\vec{T}$. Let $T_1,\ldots , T_{\ell}$ be the connected components of $T\setminus t_c$ containing a node $t$ such that $X_t\cap S\neq S$. Observe that as $|S|\geqslant w+1$ and $\tcw(T,\mathcal{X})=w$, $S$ cannot be included in a single bag of $(T,\mathcal{X})$ and thereby $\ell\geqslant1$. Consider the following tree-cut decomposition $(T^*,\mathcal{X}^*)$ of $G[S]$:
\begin{itemize}
\item[$\bullet$] $T^*$ is a star with central node $t_c$ and leaf nodes $t_1\dots t_{\ell}$,
\item[$\bullet$] the bag of node $t_c$ is $X^*_c=X_{t_c}\cap S$,
\item[$\bullet$] for every leaf node $t_i\in V(T^*)$, we set $X^*_i=\bigcup_{t\in V(T_i)} X_t\cap S$.
\end{itemize}

Observe that $(T^*,\mathcal{X}^*)$ is a tree-cut decomposition of $G[S]$ and since $|S|\geqslant w+1$, it is non-trivial.
By construction, as it is obtained from $(T,\mathcal{X})$ by contracting subtrees and removing  vertices from bags, we have that $\intcw(T^*,\mathcal{X}^*)\leqslant w$. It remains to prove that $|X_{t_c}|+\ell\leqslant w$ and that $\gamma_S(\partial_G(S)\cap X_t)\leqslant w$ for every leaf node $t$.
The former inequality directly follows from Observation~\ref{sldjflk34} and the fact that $(T,\mathcal{X})$ is an optimal tree-cut decomposition of $G$. The latter inequality follows from the fact that $t$ does not have an out-going edge in $\vec{T}$, in particular, {\bf Rule 1} does not orient any edge incident with $t$ outwardly from $t$.
\end{proof}

Given a $3$-edge-connected graph, applying Lemma~\ref{justifs} on a large leaf of a tree-cut decomposition that satisfies  the {\sl Invariant}, we obtain:

\begin{corollary} \label{cor:justifs}
Let $G$ be a 3-edge-connected graph $G$ such that $\mathbf{tcw}(G)\leqslant w$, and let
$t$ be a large leaf of a tree-cut decomposition $(T,\mathcal{X})$ satisfying    the {\sl Invariant}. Then $I(X_t,G)=(G[X_t],w,\partial_G(X_t),\gamma_{X_t})$ is a {\sc Yes}-instance of {\sc Constrained Star-Cut Decomposition}.
\end{corollary}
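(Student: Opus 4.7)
The plan is to derive Corollary~\ref{cor:justifs} as an immediate application of Lemma~\ref{justifs} with the choice $S := X_t$. All that is required is to verify the two hypotheses of the lemma in this particular setting, after which the lemma directly produces the claimed Yes-instance.

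The first hypothesis, $|S| \geqslant w+1$, I would verify using the definition of a large leaf: a large leaf satisfies $|X_t| \geqslant 2w$, so $|X_t| \geqslant w+1$ as soon as $w \geqslant 1$ (and in fact the lemma already assumes $w \geqslant 2$, which holds in the algorithmic context where the corollary is applied). The second hypothesis, $|\delta_{G}(X_t, V(G)\setminus X_t)| \leqslant 2w$, I would obtain from the structure of $T$: since $t$ is a leaf of $T$, there is a unique tree-edge $e$ incident to $t$, and the definition of adhesion gives
$$
\delta^{T}(e) \;=\; \delta_{G}\bigl(X_t,\ \bigcup_{s \in V(T)\setminus\{t\}} X_s\bigr) \;=\; \delta_{G}(X_t, V(G)\setminus X_t).
$$
The Invariant assumed on $(T,\mathcal{X})$ states that $\intcw(T,\mathcal{X}) \leqslant 2w$, and $|\delta^{T}(e)|$ is one of the quantities in the maximum defining $\intcw(T,\mathcal{X})$; hence $|\delta_{G}(X_t, V(G)\setminus X_t)| \leqslant 2w$, as required. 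Note that the internal-width (rather than the width) is the relevant bound here precisely because $t$ is a leaf, and leaf torsos are excluded from the internal-width — this is exactly what makes the corollary applicable to a large leaf whose own torso may have size exceeding $2w$.

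With the hypothesis $\mathbf{tcw}(G) \leqslant w$ carried over verbatim and $G$ being 3-edge-connected by assumption, Lemma~\ref{justifs} then concludes that $I(X_t, G) = (G[X_t], w, \partial_G(X_t), \gamma_{X_t})$ is a Yes-instance of \textsc{Constrained Star-Cut Decomposition}, which is precisely the statement of the corollary. There is no real obstacle: the argument is essentially pure bookkeeping, and the only mild subtlety is recognizing that the internal-width bound from the Invariant, not the full width, is exactly what supplies the adhesion bound required by Lemma~\ref{justifs}.
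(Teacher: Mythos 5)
Your proof is correct and is essentially the paper's own argument spelled out: the paper simply states that the corollary follows by ``applying Lemma~\ref{justifs} on a large leaf,'' and you have done precisely the required bookkeeping, namely verifying $|X_t|\geqslant w+1$ from the large-leaf condition and $|\delta_G(X_t,V(G)\setminus X_t)|\leqslant 2w$ from the adhesion bound in the {\sl Invariant}, with the correct observation that the internal-width (which excludes leaf torsos but retains all adhesions) is exactly what is needed.
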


The next lemma shows that if  a large leaf bag $X_t$ of a tree-cut decomposition $(T,\mathcal{X})$ satisfying    the {\sl Invariant} defines a {\sc Yes}-instance of the \textbf{Constraint Tree-Cut Decomposition} problem, then $(T,\mathcal{X})$  can be further refined.

\begin{lemma} \label{lem:refine}
Let $G$ be a 3-edge-connected graph $G$ and $(T,\mathcal{X})$ be tree-cut decomposition of satisfying    the {\sl Invariant}. If $(T^*,\mathcal{X}^*)$ is a solution of  {\sc Constrained Star-Cut Decomposition} on the instance $I(X_t,G)=(G[X_t],w,\partial_G(X_t),\gamma_{X_t})$ where $t$ is a large leaf of $(T,\mathcal{X})$, then the pair
$(\tilde{T},\tilde{\mathcal{X}})$  where
\begin{itemize}
\item[$\bullet$] $V(\tilde{T})=(V(T)\setminus\{t\}) \cup V(T^*)$,
\item[$\bullet$] $E(\tilde{T})=(E(T)\setminus \{(t,t')\})\cup E(T^*)\cup \{(t_c,t')\}$, where $t'$ is the unique neighbor of $t$ in $T$ and $t_c$ is the central node of $T^*$,
\item[$\bullet$] $\tilde{\cal X}=({\cal X}\setminus \{X_{t}\})\cup{\cal X}^{*}$
\end{itemize}
is a tree-cut decomposition of $G$  satisfying    the {\sl Invariant}. Moreover the number of non-empty bags is strictly larger
in $(\tilde{T},\tilde{\mathcal{X}})$ than in $(T,\mathcal{X})$.
 \end{lemma}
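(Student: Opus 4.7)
My plan is to verify, in turn, that $(\tilde T,\tilde{\mathcal X})$ is a well-defined tree-cut decomposition with strictly more non-empty bags, that every adhesion is at most $2w$, and that every non-leaf of $\tilde T$ has torso of size at most $2w$. The structural part is immediate: the bags of $\mathcal X^*$ form a partition of $X_t$ by definition of a tree-cut decomposition of $G[X_t]$, so replacing $X_t$ by $\mathcal X^*$ still yields a partition of $V(G)$; and $\tilde T$ is a tree because we removed a leaf $t$ from $T$ and grafted the star $T^*$ back through the new edge $(t_c,t')$. Because $(T^*,\mathcal X^*)$ is non-trivial (hence has at least two non-empty bags) and replaces a single bag $X_t$, the number of non-empty bags strictly increases.

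For the adhesions I would partition $E(\tilde T)$ into three classes. Edges inherited from $E(T)\setminus\{(t,t')\}$ induce exactly the same bipartition of $V(G)$ in $\tilde T$ as in $T$ (the replacement of $X_t$ by $\mathcal X^*$ happens entirely inside one side), so their adhesion is unchanged and at most $2w$ by the \emph{Invariant} on $(T,\mathcal X)$. The new edge $(t_c,t')$ splits $\tilde T$ into two parts whose bags union to $X_t$ and $V(G)\setminus X_t$ respectively, i.e.\ the same split as $(t,t')$ in $T$, hence the same adhesion $\leqslant 2w$. For a star edge $(t_c,t_i)$, its adhesion equals $|\delta_G(X^*_{t_i},V(G)\setminus X^*_{t_i})|$, which I would decompose as edges inside $G[X_t]$ between $X^*_{t_i}$ and $X_t\setminus X^*_{t_i}$ (bounded by $\intcw(T^*,\mathcal X^*)\leqslant w$) plus edges from $X^*_{t_i}$ to $V(G)\setminus X_t$, which equals $\gamma_{X_t}(\partial_G(X_t)\cap X^*_{t_i})$ and is at most $w$ by condition (3) of {\sc Constrained Star-Cut Decomposition}. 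The sum is at most $2w$.

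For the torsos, the key observation is that for any $s\in V(T)\setminus\{t\}$ the connected components of $\tilde T\setminus s$ are in natural bijection with those of $T\setminus s$, and corresponding components have the same union of bags because $\mathcal X^*$ partitions $X_t$; hence $H_s$ and $\bar H_s$ are literally the same graphs in $T$ and in $\tilde T$, and the bound $|V(\bar H_s)|\leqslant 2w$ is inherited from the \emph{Invariant}. The only genuinely new internal node to handle is $t_c$; this is the main obstacle, since $t_c$ carries both the $\ell$ star leaves and the external link to $t'$. Using that $G$ is $3$-edge-connected, Observation~\ref{sldjflk34} gives $|V(\bar H_{t_c})|=|X^*_{t_c}|+|\mathcal T_{t_c}|$, where $\mathcal T_{t_c}$ consists of the $\ell$ leaves of $T^*$ (each carrying a non-empty bag, by the convention that leaves are non-empty) plus at most one further component, the one containing $t'$ (contributing $1$ if and only if $V(G)\setminus X_t\neq\emptyset$). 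Combining with condition (3), $|X^*_{t_c}|+\ell\leqslant w$, I get $|V(\bar H_{t_c})|\leqslant w+1\leqslant 2w$, which is where the factor $2$ in the approximation is spent. Collecting the three bounds establishes $\intcw(\tilde T,\tilde{\mathcal X})\leqslant 2w$, completing the verification.
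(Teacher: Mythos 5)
Your proof is correct and follows essentially the same approach as the paper: it invokes Observation~\ref{sldjflk34} together with condition~(3) of \textsc{Constrained Star-Cut Decomposition} to bound the new torso at $t_c$ by $w+1$, and splits the adhesion of each new star edge into an ``internal'' part (bounded by $\intcw(T^*,\mathcal X^*)\leqslant w$) and an ``external'' part (bounded by $\gamma_{X_t}\leqslant w$). The only difference is that you spell out explicitly why the inherited edges and torsos are unchanged, which the paper's proof leaves implicit.
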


\begin{proof}
By construction, $(\tilde{T},\tilde{\mathcal{X}})$ is a tree-cut decomposition of $G$. The fact that $(T^*,\mathcal{X}^*)$ is non-trivial implies that the number of non-empty bags is strictly larger in $(\tilde{T},\tilde{\mathcal{X}})$ than in $(T,\mathcal{X})$.

It remains to prove that $\intcw(\tilde{T},\tilde{\mathcal{X}})\leqslant 2\cdot w$. Since $(T^*,\mathcal{X}^*)$ is a solution to $I(X_t,G)$, we have $|X^*_{t_c}|+\ell\leqslant w$. As $G$
is edge $3$-connected, by Observation~\ref{sldjflk34}, the torso size at $t_c$ in $(\tilde{T},\tilde{\mathcal{X}})$ at most $w+1$, which is at most $2w$.
Let us verify that the adhesion of $(\tilde{T},\tilde{\mathcal{X}})$ is at most $2w$.
For this, it suffices to bound the value $|\delta^{\tilde{T}}(e)|$ for the newly created edges $e=\{t_i,t_c\}$, for all $i\in [\ell]$.
We have
\begin{align*}
|\delta^{\tilde{T}}(\{t_i,t_c\})|&=|\delta_{G}(\tilde{X}_{t_i},V(G)\setminus \tilde{X}_{t_i})|\\
& =|\delta_{G}(\tilde{X}_{t_i},X_t\setminus \tilde{X}_{t_i})|+|\delta_{G}(\tilde{X}_{t_i},V(G)\setminus X_t)|\leqslant 2w.\\
\end{align*}
The inequality follows from that $(T^*,\mathcal{X}^*)$ is a solution to $I(X_t,G)$. More precisely,
$|\delta_{G}(\tilde{X}_{t_i},X_t\setminus \tilde{X}_{t_i})|\leqslant w$ is implied by the fact that $\intcw(T^*;\mathcal{X})\leqslant w$. And $|\delta_{G}(\tilde{X}_{t_i},V(G)\setminus X_t)|\leqslant w$ is a consequence of  $\gamma_{X_t}(\partial_G(X_t)\cap X^*_{t_i})\leqslant w$.

Finally, as $(T^*,\mathcal{X}^*)$ is a non-trivial tree-cut decomposition, the number of non-trivial nodes is strictly larger in $(\tilde{T},\tilde{\mathcal{X}})$  than in $(T,\mathcal{X})$.
\end{proof}
\subsection{An FPT algorithm for {\sc Constrained Star-Cut Decomposition}}

Lemma~\ref{tfdwtcw} provides a quadratic bound on the treewidth of a graph in term of its tree-cut width. This allows us to develop a dynamic programming algorithm for solving \textsc{Constrained Star-Cut Decomposition} on graphs of bounded treewidth. To obtain a tree-decomposition, we use the 5-approximation {\sf FPT}-algorithm of the following proposition.


\begin{proposition}[see~\cite{BodlaenderDDFLP13anap}]\label{5approx}
There exists an algorithm which, given a graph $G$ and an integer $k$, either correctly decides that ${\bf tw}(G)>w$ or outputs a tree-decomposition of width at most $5w+4$ in time $2^{O(w)}\cdot n$.
\end{proposition}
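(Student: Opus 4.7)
\noindent\textsc{Proof proposal.} This proposition is quoted directly from~\cite{BodlaenderDDFLP13anap}, so strictly speaking my plan is just to invoke it as a black box within our approximation algorithm. For context, I sketch how I would reproduce the argument.

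I would design a recursive procedure that, given a graph $G$ together with a ``portal'' set $W\subseteq V(G)$ of size at most $3w+4$, either outputs a tree decomposition of $G$ whose root bag contains $W$ and has width at most $5w+4$, or correctly certifies that $\tw(G)>w$; invoking this on $G$ with $W=\emptyset$ yields the statement of the proposition. The overall strategy is therefore a top-down construction of the decomposition driven by balanced separators.

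The inner step of the recursion is to compute a balanced separator $S$ of $W$ in $G$, that is a set of size at most $w+1$ whose removal leaves every connected component with at most $2|W|/3$ vertices of $W$. If no such $S$ exists, a standard duality argument between treewidth and balanced separators certifies $\tw(G)>w$. Otherwise I form the root bag $W\cup S$ of the output decomposition and recurse on each connected component $C$ of $G\setminus(W\cup S)$ with portal set $(W\cup S)\cap N(C)$, whose size stays within $3w+4$ thanks to the balance property and to $|S|\leqslant w+1$. Every bag produced has size at most $|W|+|S|\leqslant 5w+4$, and since each recursive call strictly reduces the non-portal vertex set the recursion terminates with a linear total work outside the separator subroutine.

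The main obstacle, and indeed the technical heart of~\cite{BodlaenderDDFLP13anap}, is to design a balanced-separator subroutine running in $2^{O(w)}\cdot n$ time. The recursive skeleton above is essentially folklore and would easily yield an $n^{O(1)}$-time or $2^{O(w)}\cdot n^{O(1)}$-time algorithm, but achieving single-exponential dependence on $w$ together with linear dependence on $n$ requires careful amortization, typically via a tree-decomposition-like data structure maintained across recursive calls that is refined locally, rather than recomputed from scratch, at each level. For our purposes this technical work is exactly what we wish to avoid reproducing, which is why we simply cite the result and use it as an oracle for Section~\ref{sec:algorithm}.
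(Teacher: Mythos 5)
You correctly observe that the paper itself gives no proof here: Proposition~\ref{5approx} is stated as a direct citation to Bodlaender et al., and invoking it as a black box is all that is needed or done in the paper. Your sketch of the classical recursive balanced-separator framework is a fair piece of context, though, as you already flag, the cited linear-time result is actually obtained via a tree-decomposition ``compression'' technique that maintains and locally refines a decomposition across the recursion, rather than merely by speeding up a stand-alone balanced-separator subroutine plugged into the folklore recursion.
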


If $\tcw(G)\leqslant w$, then by Lemma~\ref{tfdwtcw} $\tw(G)\leqslant 2w^2+3w$. From Proposition~\ref{5approx},
we may assume that $G$ has treewidth $O(w^2)$
and, based on this and the next lemma, solve
\textsc{Constrained Star-Cut Decomposition}
in  $2^{O(w^2\cdot \log w)}\cdot n$ steps.

A \emph{rooted tree decomposition} $(T,{\cal X},r)$ is a tree decomposition with a distinguished node $r$ selected as the \emph{root}. A \emph{nice tree decomposition} $(T,{\cal Y},r)$ (see~\cite{Klo94}) is a rooted tree decomposition where $T$ is binary, the bag at the root is $\emptyset$, and for each node $x$ with two children $y,z$ it holds $Y_x =Y_y =Y_z$, and for each node $x$ with one child $y$ it holds $Y_x =Y_y \cup \{u\}$ or $Y_x =Y_y \setminus \{ u\}$ for some $u \in V(G)$. Notice that a nice tree decomposition is always a rooted tree
decomposition. We need the following proposition.

\begin{proposition}[see~\cite{Bod96alin}]
\label{nice_tree}
For any constant $k\geqslant 1$, given a tree decomposition of a graph $G$ of
width $\leqslant k$ and $O(|V(G)|)$ nodes, there exists an algorithm that,
in $O(|V(G)|)$ time, constructs a nice tree decomposition of $G$ of width
$\leqslant k$ and with at most $4|V(G)|$ nodes.
\end{proposition}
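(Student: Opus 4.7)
The plan is to convert the given tree decomposition $(T,{\cal Y})$ into nice form through a short sequence of standard transformations, while carefully bounding the total number of nodes created. First I would \emph{smooth} the input: root $T$ at an arbitrary node, then contract every parent--child edge $\{u,v\}$ whose bags satisfy $Y_u\subseteq Y_v$ or $Y_v\subseteq Y_u$, keeping the larger bag. A standard argument (present, e.g., in Bodlaender's linear-time algorithm) shows that after smoothing the number of nodes is at most $|V(G)|$: on the rooted tree, associate to each remaining internal node $v$ a distinct witness vertex $x_v\in Y_{\mathrm{child}(v)}\setminus Y_v$ (this exists since no child's bag is contained in the parent's), and to each leaf $\ell$ a distinct witness vertex in $Y_\ell$ not appearing in any ancestor bag.

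Second I would \emph{binarize} the rooted smoothed tree. For every node $v$ with $d\geqslant 3$ children $y_1,\dots,y_d$, replace $v$ by a left-deep binary tree of $d-1$ new join nodes, each carrying a copy of the bag $Y_v$; each leaf of this binary tree has one original child $y_i$ attached below. This multiplies the node count by at most a factor of $2$ (the number of internal nodes of a binary tree with $d$ leaves is $d-1\leqslant 2d$), so the decomposition still has $O(|V(G)|)$ nodes.

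Third I would \emph{insert introduce and forget nodes}. Along every parent--child edge $(u,v)$ of the binary tree for which $Y_u\ne Y_v$, insert a path of at most $|Y_u\triangle Y_v|\leqslant 2(k+1)$ new one-child nodes that first forget (one by one) the vertices of $Y_u\setminus Y_v$ and then introduce the vertices of $Y_v\setminus Y_u$. I would also insert forget chains above every leaf and above the root to reduce their bags to $\emptyset$, matching the definition of nice tree decomposition. Since $k$ is a fixed constant, each edge is expanded by $O(1)$ nodes, so the whole construction still yields $O(|V(G)|)$ nodes and runs in linear time.

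The main obstacle, and the place I would spend the most effort, is tightening the node count from $O(|V(G)|)$ down to the explicit bound $4|V(G)|$. The idea is an amortized, charging argument: by smoothing, each of the (at most $|V(G)|$) essential nodes can be charged to a distinct vertex of $G$; binarization charges each new join node to the child it ``separates out''; and each introduce or forget node is charged to the unique vertex it introduces or forgets, which may happen only $O(1)$ times per vertex along any root-to-leaf path in a smoothed decomposition. Summing these charges with the constants provided by the smoothing step gives the stated bound $4|V(G)|$. The rest of the argument (that the construction is indeed nice, preserves width $\leqslant k$, and runs in linear time for fixed $k$) is then routine bookkeeping.
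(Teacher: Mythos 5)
The paper does not prove Proposition~\ref{nice_tree}; it is stated as a black box and attributed to Bodlaender~\cite{Bod96alin} (the construction is also spelled out in Kloks' lecture notes~\cite{Klo94}, which the paper cites when defining nice tree decompositions). So there is no ``paper proof'' to compare against, only a citation.

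Your sketch follows the standard textbook route (smooth, binarize, insert introduce/forget chains, add a forget chain up to an empty root bag), and the first three steps are essentially correct. The smoothing bound of at most $|V(G)|$ nodes and the factor-$2$ blow-up from binarization are both sound. The place where your argument is genuinely incomplete is precisely the step you flag as the main obstacle: the tightening from $O(|V(G)|)$ to the explicit constant $4|V(G)|$. The charging you propose does not quite work as stated. Each vertex $v$ is forgotten exactly once globally (the bags containing $v$ form a connected subtree $T_v$ of the rooted tree, and $v$ exits at the unique top edge of $T_v$), but $v$ can be \emph{introduced} once per bottom-most node of $T_v$, and $T_v$ can have many such minimal nodes when the tree branches. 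Saying it happens ``$O(1)$ times per vertex along any root-to-leaf path'' is true but does not bound the \emph{total} number of introduce nodes, which is what the $4|V(G)|$ count needs; one must instead relate the number of minimal nodes of the subtrees $T_v$ to the number of join nodes and leaves (using that in a binary nice tree the number of leaves is one more than the number of join nodes), and the smoothing step is essential to make this bookkeeping close. As a minor point, ``insert forget chains above every leaf'' is also confused: leaves need no chain above them, only the root does, and the paper's definition of niceness does not constrain leaf bags.

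For the purposes of this paper the distinction is cosmetic: the algorithm in Lemma~\ref{ko5teod} only needs a nice tree decomposition with $O(|V(G)|)$ nodes and linear construction time, which your first three steps already deliver, so the imprecision in the constant $4$ would not affect the running-time analysis of Theorem~\ref{90ncwlflsk}.
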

\begin{lemma}
\label{ko5teod}
Let $(G,w,B,\gamma)$ be an input of \textsc{Constrained Star-Cut Decomposition}
and let $\tw(G)\leqslant q$.
There exists an algorithm that  given $(G,w,B,\gamma)$ outputs, if one exists,  a solution of $(G,w,B,\gamma)$
in $2^{O((q+w)\log w)}\cdot n$ steps.
\end{lemma}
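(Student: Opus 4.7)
The plan is to compute a nice tree decomposition of $G$ and then perform bottom-up dynamic programming along it. First I would invoke Proposition~\ref{5approx}: if it reports $\tw(G)>q$ we stop (this contradicts the hypothesis, so we may safely return ``no solution''); otherwise we obtain a tree decomposition of width at most $5q+4$ which, via Proposition~\ref{nice_tree}, we convert to a nice tree decomposition $(T,\mathcal{Y},r)$ with $O(n)$ nodes and width $O(q)$.

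The output is a star with centre $t_c$ and $\ell$ leaves where $|X_{t_c}|+\ell\leqslant w$ forces $\ell\leqslant w$. Hence at most $w+1$ global labels (one centre and $w$ leaves) suffice to describe the assignment of any vertex, so each bag $Y_x$ uses at most $|Y_x|$ of them. The DP state at node $x$ stores: (i) a labelling of $Y_x$ by symbols from $\{C,1,2,\ldots,k\}$ for some $k\leqslant \min(|Y_x|,w)$, identifying which vertices go to the centre block and which active leaf block; (ii) for each active leaf block $i$, the cumulative $\gamma$-weight $\sigma_i\in\intv{0}{w}$ and cumulative cut $\kappa_i\in\intv{0}{w}$ accounting for already-forgotten vertices of that block; and (iii) two scalars, namely the number $c\in\intv{0}{w}$ of vertices already committed to the centre, and the number $\rho\in\intv{0}{w}$ of leaf blocks already closed. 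A table entry records whether such a partial assignment is realisable by the portion of $G$ processed in the subtree rooted at $x$.

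The transitions are the usual ones for DP on nice tree decompositions. At an introduce node for $v$, I would branch over assigning $v$ to the centre, to an existing active leaf block, or to a new singleton leaf block (provided the total leaf count does not exceed $w$), and update the cuts and weights for edges from $v$ to vertices already in the bag. At a forget node for $v$, I would charge $\gamma(v)$ and the edges from $v$ to already-forgotten vertices to the corresponding $\sigma_i$ and $\kappa_i$; if $v$ was the last representative of its leaf block in $Y_x$, close the block by verifying $\sigma_i,\kappa_i\leqslant w$ and incrementing $\rho$. At the root, accept any entry with $c+\rho\leqslant w$ and satisfying the non-triviality requirement; a standard back-trace then reconstructs the star decomposition. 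The foreseeable main obstacle is the join node: the two children may use different local names for the same leaf blocks, so I would enumerate bijections between the two local label sets that are consistent with the labelling of $Y_x$, then combine per-block counters additively while subtracting the contributions of edges lying inside $Y_x$ (to prevent double-counting, which I would enforce by charging bag-internal edges only at the introduce step). The reconciliation of labels costs at most $w!\leqslant 2^{O(w\log w)}$ per pair of child states.

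For complexity, the number of labellings of $Y_x$ is at most $(w+2)^{|Y_x|}$; the per-block counters contribute a factor $(w+1)^{2w}$; and the scalars $(c,\rho)$ contribute $(w+1)^2$. Since $|Y_x|=O(q)$, the total number of DP states per node is $2^{O((q+w)\log w)}$. Processing each node is dominated by the join-node table merges, which take time at most quadratic in the per-node state count times the $w!$ bijection factor, still $2^{O((q+w)\log w)}$ per node. Summing over the $O(n)$ nodes yields the claimed running time of $2^{O((q+w)\log w)}\cdot n$.
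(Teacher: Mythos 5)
Your overall plan matches the paper's: obtain a nice tree decomposition of width $O(q)$ and run a bottom-up dynamic program whose states record a partition of the bag into a centre part and leaf blocks together with, for each block, the accumulated $\gamma$-weight and accumulated cut size, leading to a table of size $2^{O((q+w)\log w)}$. The paper's DP state $(\phi,a,\alpha,\beta)$ is just the global-label version of your state.

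There is, however, a genuine gap in the way you manage block identities. You track only the blocks that currently have a representative in the bag $Y_x$, and you permanently ``close'' a block the moment its last bag representative is forgotten, incrementing $\rho$. But the problem places no connectivity requirement on the leaf parts $X_i$ of the star-cut decomposition, so a valid solution may assign two sets of vertices with disjoint, non-adjacent traces in $T$ to the \emph{same} leaf block $X_i$. In your DP these would be closed as two separate blocks, inflating $\rho$ by one, and since the acceptance test is $c+\rho\leqslant w$, this can cause a {\sc Yes}-instance to be rejected: splitting a disconnected $X_i$ into its pieces keeps the cut and weight bounds (the pieces share no edges, so cuts and weights only decrease) but strictly increases $\ell$, which may violate $|X_{t_c}|+\ell\leqslant w$. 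The paper sidesteps this by fixing $\ell$ in advance and keeping $\alpha,\beta:[\ell]\to[0,w]$ for \emph{all} $\ell$ labels at every node, including labels with no representative in the current bag, so a block can ``come back'' in a distant part of the tree decomposition. Your bijection-matching at join nodes is also superfluous once you note that every active block must have a representative in the shared bag, so the identification is forced by the two labelings of $Y_x$; the real repair is to keep a counter pair for every one of the $\ell$ potential blocks rather than only the active ones, which (up to notation) is exactly the paper's scheme and stays within the claimed $2^{O((q+w)\log w)}$ bound.
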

\begin{proof-sketch}
From Proposition~\ref{nice_tree}, we can
assume that we are given  a nice tree-decomposition $(T,{\cal Y},r)$ of $G$ of width at most $O(q)$, which can be obtained in
time $2^{O(q)}\cdot n$ because of Proposition~\ref{5approx}. We describe dynamic programming tables. Let $Z_t$ be the vertex set $\bigcup_{t'\in V(T_t)} Y_{t'}$, where $T_t$ is the subtree of $T$ rooted at $t$. For every $1\leq \ell \leq w$, we
need to compute a collection $\mathcal{X}=\{X_0,\dots, X_{\ell}\}$ of pairwise disjoint subsets of $V(G)$ (some of them may be empty sets) such that $|X_0|+\ell\leqslant w$. The  subset $X_0$ will play the role of the bag of
the central node of the star-cut decomposition.

To guarantee that the specification of the problem can be checked, the dynamic programming table at node $t$ will store a collection of quadruples $(\phi,a,\alpha,\beta)$ with the following specifications:
\begin{itemize}
\item[(i)] $\phi: Y_t \rightarrow [0,\ell]$, indicating that $x\in Y_t$ belongs to $X_{\phi(x)}\cap Z_t$;
\item[(ii)] $a$ is a number indicating the size $|X_0\cap Z_t|$ of the central bag in $G[Z_t]$;
\item[(iii)] $\alpha: [\ell]\rightarrow [0,w]$ is a function, indicating the weight $\gamma(B\cap X_i\cap Z_t)$;
\item[(iv)] $\beta: [\ell] \rightarrow [0,w]$ is a function, indicating $|\delta(X_i,Z_t\setminus X_i)|$;
\end{itemize}
Suppose that we have constructed tables for all nodes of $T$ such that: for every node $t$, a quadruple $(\phi,a,\alpha,\beta)$ appears in the table at node $t$ if and only if there exists a collection $\mathcal{X}'=\{X'_0,\dots X'_{\ell}\}$ meeting the specifications. It is not difficult to see that the instance $(G,w,B,\gamma)$ is {\sc Yes} if and only if the table at the root contains a quadruple $(\phi,a,\alpha,\beta)$ such that $\ell +a \leq w$. Furthermore, such tables can be constructed using standard dynamic programming in a bottom-up manner.

Observe that the size of the dynamic table at each node $t$ is dominated by the number of collections $\mathcal{X}=\{X_0,\dots X_{\ell}\}$ of pairwise disjoint subsets of $Y_t$, with $\ell\leqslant w$, which is $2^{O((q+w)\log w)}$. Maintaining these tables follows by a standard dynamic programming algorithm.
%
\end{proof-sketch}

\subsection{Piecing everything together}

We now present a 2-approximation algorithm for {\sc Tree-cut Width} leading to the following result.

\begin{theorem}\label{90ncwlflsk}
There exists an algorithm that, given a graph $G$ and a $w\in\Bbb{Z}_{\geqslant 0}$, either outputs a tree-cut decomposition of $G$ with width at most $2w$
or correctly reports that no tree-cut decomposition of $G$ with width at most $w$ exists in $2^{O(w^{2}\cdot \log w)}\cdot n^{2}$ steps.
\end{theorem}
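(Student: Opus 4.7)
The plan is to implement the iterative refinement strategy sketched in Section~\ref{sec:algorithm}, assembling the machinery already developed. First I would preprocess the input by repeatedly applying Lemma~\ref{e92mfsd} to split $G$ along minimal cuts of size at most $2$ and Lemma~\ref{i8i30eieue89} to remove or dissolve vertices of degree at most $2$, obtaining a family of $3$-edge-connected graphs whose tree-cut widths all equal $\tcw(G)$; solutions on the pieces can be reassembled via the construction used in the proof of Lemma~\ref{e92mfsd}. Thus it suffices to design the algorithm under the assumption that $G$ is $3$-edge-connected.

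For such $G$, I would initialize $(T,\mathcal{X})$ as the trivial tree-cut decomposition whose single node has bag $V(G)$; this clearly satisfies the {\sl Invariant}. The main loop then proceeds as follows: if $(T,\mathcal{X})$ has no large leaf, halt and output it; otherwise, pick a large leaf $t$ and process it. To process $t$, first run the algorithm of Proposition~\ref{5approx} on $G[X_t]$ with parameter $2w^2+3w$. If it reports $\tw(G[X_t])>2w^2+3w$, then by Proposition~\ref{tfdwtcw} we may safely conclude $\tcw(G)>w$. Otherwise, invoke the algorithm of Lemma~\ref{ko5teod} on the instance $I(X_t,G)=(G[X_t],w,\partial_G(X_t),\gamma_{X_t})$ with $q=O(w^2)$. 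If it reports NO, then by Corollary~\ref{cor:justifs} we conclude $\tcw(G)>w$; otherwise it returns a solution $(T^*,\mathcal{X}^*)$, which Lemma~\ref{lem:refine} uses to refine $(T,\mathcal{X})$ into a new tree-cut decomposition still satisfying the {\sl Invariant} and with strictly more non-empty bags.

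Correctness when the loop terminates normally follows because no leaf $t$ has $|X_t|\geqslant 2w$, so Observation~\ref{sldjflk34} gives $|V(\bar{H}_t)|\leqslant |X_t|+1\leqslant 2w$ for every leaf; combined with $\intcw(T,\mathcal{X})\leqslant 2w$, the overall width is at most $2w$. For the complexity, the number of non-empty bags is bounded by $n$ and strictly increases at each refinement, so the loop runs at most $n$ times; each iteration costs $2^{O(w)}\cdot n$ from Proposition~\ref{5approx} plus $2^{O(w^2\log w)}\cdot n$ from Lemma~\ref{ko5teod} with $q=O(w^2)$, yielding the claimed total $2^{O(w^2\log w)}\cdot n^2$. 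Since all the conceptual work has been absorbed into the preceding lemmata, the only delicate point is ensuring that each NO answer from a subroutine legitimately certifies $\tcw(G)>w$ rather than a local obstruction; this is precisely what Proposition~\ref{tfdwtcw} and Corollary~\ref{cor:justifs} guarantee, provided the hypotheses of Corollary~\ref{cor:justifs} (namely $3$-edge-connectedness, $|X_t|\geqslant w+1$, and adhesion at most $2w$ at $t$) are maintained across iterations, which follows from the preprocessing and the {\sl Invariant}.
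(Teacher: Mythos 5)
Your proposal is correct and matches the paper's proof essentially step for step: the same preprocessing to 3-edge-connected graphs via Lemmata~\ref{e92mfsd} and~\ref{i8i30eieue89}, the same iterative refinement of large leaves using Proposition~\ref{5approx}, Lemma~\ref{ko5teod}, Corollary~\ref{cor:justifs}, and Lemma~\ref{lem:refine}, and the same termination and complexity analysis. The only minor imprecision is the claim that the 3-edge-connected pieces have tree-cut widths all \emph{equal} to $\tcw(G)$—Lemma~\ref{e92mfsd} only guarantees they are at most $\tcw(G)$—but this does not affect the argument.
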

\begin{proof}
Recall that, by Lemmata~\ref{e92mfsd} and~\ref{i8i30eieue89}, we can assume that $G$ is $3$-edge-connected. If not, we iteratively decompose $G$ into 3-edge-connected components using the linear-time algorithm of~\cite{WatanabeTM93}. A tree-cut decomposition of $G$ can easily built from the tree-cut decomposition of its $3$-edge-connected components using Lemma~\ref{e92mfsd}. As mentioned earlier, the trivial tree-cut decomposition satisfies    the {\sl Invariant}. Let $(T,\mathcal{X})$ be a tree-cut decomposition satisfying    the {\sl Invariant}. As long as the current tree-cut decomposition $(T,\mathcal{X})$ contains a large leaf $\ell$, the algorithm applies the following steps repeatedly:
\begin{enumerate}
\item Let $X_{\ell}\in\mathcal{X}$ be the bag associated to a large leaf $\ell$. Compute a nice tree-decomposition of $G[X_{\ell}]$ of width at most $O(w^2)$ in $2^{O(w^{2})}\cdot n$ time. If such a decomposition does not exist, as $G[X_{\ell}]$ is a subgraph of $G$, Lemma~\ref{tfdwtcw} implies $\tcw(G)>w$ and the algorithm stops.
\item Solve \textsc{Constrained Star-Cut Decomposition} on $I(X_t,G)$ using the dynamic programming of Lemma~\ref{ko5teod} for $q = O(w^2)$ in time $2^{O(w^2\cdot \log w)}\cdot n$.
\item If $I(X_t,G)$ is a NO-instance, then by Corollary~\ref{cor:justifs}, $\tcw(G)>w$  and the algorithm stops.
\item Otherwise, by Lemma~\ref{lem:refine}, $(T,\mathcal{X})$ can be refined into a new tree-cut decomposition satisfying    the {\sl Invariant}.
\end{enumerate}

The algorithm either stops when we can correctly report that $\tcw(G)>w$ (step 1 or 3) or when the current tree-cut decomposition has no large leaf. In the latter case, as $(T,\cal{X})$ satisfies  (*), it holds that $\tcw(T,\mathcal{X})\leqslant 2\cdot w$. Observe that  each refinement step (step 4) strictly increases the number of non-empty bags (see Lemma~\ref{lem:refine}). It follows that the above steps are repeated at most $n$ times, implying that the running time of the 2-approximation algorithm is $2^{O(w^{2}\cdot \log w)}\cdot n^{2}$.
\end{proof}

\section{Open problems}

The main open question is on the possibility of improving the
running time or the approximation factor of our algorithm.
Notice that the parameter dependence $2^{O(w^{2}\cdot \log w)}$
is based on the fact that the tree-cut width is bounded by a quadratic
function of treewidth. As we proved (Theorem~\ref{theokg8h}),
there is no hope of  improving this upper bound. Therefore any
improvement  of the parametric dependence should avoid dynamic
programming on tree-decompositions or significantly improve the running time. Another issue is whether
we can improve the quadratic dependence on $n$ to a linear one.
In this direction we actually believe that an exact {\sf FPT}-algorithm for the tree-cut width can be constructed using the
``set of characteristic sequences'' technique, as this was done
for  other width parameters~\cite{ThilikosSB-1,ThilikosSB-2,BodlaenderK96,BodlaenderT04,Soares13Purs,JisuKO15cons}. However, as this technique is more involved, we believe that it would imply a
higher parametric dependence than the one of our algorithm.

\vspace{.35cm}

{\small\noindent\textbf{Acknowledgement}. We would like to thank the reviewers of the extended abstract of this work for helpful remarks that improved the presentation of the manuscript.}

\bibliographystyle{abbrv}
\bibliography{biblio-2px-tcw}

\end{document}